\newcommand{\rev}[1]{\textcolor{black}{#1}}
\definecolor{forestgreen}{rgb}{0.13, 0.55, 0.13}
\newtheorem{theorem}{Theorem}[section]
\newtheorem{lemma}[theorem]{Lemma}
\newtheorem{observation}[theorem]{Observation}
\newtheorem{corollary}[theorem]{Corollary}
\newtheorem{claim}{Claim}
\title{Complexity results for a cops and robber game on directed graphs
}
\author[1]{Walid Ben-Ameur}
\author[1]{Alessandro Maddaloni }
\affil[1]{SAMOVAR, Télécom SudParis, Institut Polytechnique de Paris, Palaiseau, France}
\tikzset{loop/.style={min distance=5mm,looseness=5}}
\tikzset{tinoeud/.style={draw, circle, minimum height=0.01cm}}
\begin{document}

\maketitle

\begin{abstract}
{We investigate a cops and robber game on directed graphs, where the robber moves along the arcs of the graph, while the cops can select any position at each time step. Our main focus is on the cop number: the minimum number of cops required to guarantee the capture of the robber. We {prove} that 
{deciding} whether the cop number {of a digraph} is {equal to} 1 
is NP-hard, whereas this is decidable in polynomial time for tournaments. Furthermore, we show that computing the cop number for general digraphs is fixed parameter tractable when parameterized by a generalization of vertex cover. However, for tournaments, tractability is achieved with respect to the minimum size of a feedback vertex set. Among our findings, we prove that the cop number of a digraph is equal to that of its reverse digraph, and we draw connections to the matrix mortality problem.}
\end{abstract}

{\bf{Keywords:}}
Cops and robber games, Cop number, Complexity, Fixed-parameter tractability, Digraphs, Hunters and rabbit, Matrix mortality.

\section{Introduction}


\label{sec:def}
Consider a directed graph  $D = (V,A)$ where loops are allowed. 
A game, introduced in \cite{ourpaper}, and refereed as HCISR (helicopter cops invisible slow robber) is defined as follows. A set of cops wants to capture a robber moving on $D$.  At step 1, the cops pick $W_1\subseteq V$, then the robber picks a vertex $r_i\in V$.  At step $i+1$ (for any $i\ge 1$), the cops pick $W_{i+1}\subseteq V$ and the robber picks $r_{i+1} \in N^+(r_{i})$, where $N^+(v)$  denotes the set of vertices $y$ such that \rev{$(v,y)$} is an arc of $D$.
In other words, at each step the cops can pick any vertex, while the robber must pick a vertex adjacent from his current position. 
We say that the cops capture the robber if either $r_i \in W_i$ or $N^+(r_{i-1}) = \emptyset$   for some $i\ge 1$.
The capture is at time $t$, if $t$ is the minimum index such that either $r_t \in W_t$ or $N^+(r_{t-1}) = \emptyset$.
Note that the cops do not know the vertex picked by the robber; their strategy is defined by the sequence $(W_i)_{i\ge 1}$, while the robber strategy is defined by $(r_i)_{i\ge 1}$. 

A cop strategy is winning if it allows to capture  the robber regardless of his strategy. We say that a cop strategy uses $h$ cops when $h$ is the maximum of $|W_i|$ over all {time} indices. 
The capture time of a cop winning strategy is the maximum time step $\Omega$, over all possible robber's strategies, such that the cops capture the robber at time $\Omega$. \\
Let $R_i$ be the set of vertices where the robber can be at step $i$. 
Observe that if the robber was not yet captured at time step $i- 1$, then we have $R_i = N^+(R_{i-1}) \setminus W_i$, where  $N^+(S):=\{ y \in V \ | \ \exists x \in S, \mbox{ with } (x,y) \in A$\}) for $S \subset V$. 
Capturing  the robber at time step $i$ is equivalent to have $R_i = \emptyset$ and $R_{i-1} \neq \emptyset$. 

The cop number of $D$, denoted by \rev{$\mbox{cn}(D)$}, is the minimum $h$ such that there exists a winning strategy using $h$ cops.  If $\mbox{cn}(D) \leq k$, we say that $D$ is a $k$-copwin digraph. The capture time using $\mbox{cn}(D)$ cops and applying the best strategy (the one minimizing capture time) is denoted by $ct(D)$. 
 {For illustration}, consider a digraph $D$ (from \cite{ourpaper}) containing a directed cycle on $4$ vertices numbered from $1$ to $4$ and $3$ loops (no loop around $4$). 
A winning strategy using only $1$ cop is given in the following table where the cop position $W_t$ and the robber territory $R_t$ are provided for $1 \le t \le 8$. The cop number of this graph is then equal to $1$ while the capture time equals $8$ (it is not possible to guarantee capture before time step $8$). 
\begin{table}[h!]
\label{tab:cop}
\center
\hspace*{-0.3cm}
\begin{tabular}{|c|c|c|c|c|c|c|c|c|}
\hline
$t$  & 1 & 2 & 3& 4& 5 &6 & 7 & 8 \\
\hline
$W_t$  & $\{ 3 \}$& $\{1   \}$    & $\{2\}$  & $\{3 \}$& $\{1 \}$
& $\{2 \}$& $\{3 \}$ & $\{ 1\}$ \\
\hline
$R_t$ & $\{1, 2, 4 \}$ & $\{ 2, 3 \}$ & $\{3, 4 \}$ & $\{1, 4 \}$& $\{ 2 \}$& $\{3 \}$& $\{ 4 \}$ & $\emptyset$\\
\hline
\end{tabular}
\caption{{ Winning strategy for the cops on a cycle with four vertices and loops on all vertices except vertex $4$.}}
\end{table}
 
{One can also easily check that when}  $D$ is a path on $n$ vertices, then $\mbox{cn}(D) = 0$ and $ct(D) = n+1$. \\ \ \\
 The cop number is upper bounded by $1$ plus the directed pathwidth $pw(D)$ of the graph \cite{ourpaper}. If capture is required to be monotone (i.e., the robber territory never increases), then at least $pw(D)$ cops are needed while $1+pw(d)$ is still a valid upper bound.   The size of a minimum feedback vertex set is also an obvious upper bound of $\mbox{cn}(D)$ while a lower bound is given by $\max_{S \subset V} \delta^+(D[S])$ (i.e., the maximum through all induced subgraphs of the minimum outdegree). A polynomial-time algorithm is presented  in \cite{ourpaper}  to compute such a lower bound. Some links with the no-meet matroid \cite{BENAMEUR2022} and the meeting time of synchronous directed walks are also provided in \cite{ourpaper}.   

An undirected version of our  game  has been considered under the name hunters and rabbit (or prince and princess) \cite{ABRAMOVSKAYA201612,journals/combinatorics/BritnellW13}. 
In fact,  hunters and rabbit problem can be seen as a special case of our game where the digraph is symmetric and does not contain loops. 
In hunters and rabbit, the cop (or hunter) number can be computed for special graphs such as paths, cycles, grids or hypercubes \cite{ABRAMOVSKAYA201612,BOLKEMA2019360}. 
Graphs for which the cop number is equal to $1$ have been characterized in  \cite{journals/combinatorics/BritnellW13,HASLEGRAVE201412}.
An upper bound on the cop number for trees is given in \cite{gruslys2015catching}. 
Bounds related to the pathwidth of the undirected graph are also provided in 
\cite{dissaux_et_al:LIPIcs.MFCS.2023.42}. 
Note that the complexity of computing the cop number of the hunters and rabbit game is still not known, however the problem is FPT when parameterized by the size of a minimum vertex cover \cite{dissaux_et_al:LIPIcs.MFCS.2023.42}.

More generally, cops and robber games have been introduced  in \cite{Quillot} and intensively studied over the last four decades in their many variants. For a nice introduction, mainly on undirected graphs, see, e.g., \cite{Bontato}.   \\
In \cite{SEYMOUR199322} is introduced a variant with helicopter cops, namely the cops need not follow the graph edges. In their version the robber must follow the graph edges, but has unlimited speed so in one time step he (or she) can visit any vertex on a cop free path starting from his position. When the robber is visible to the cops, the cop number of an undirected graph, equals the treewidth of the graph plus one \cite{SEYMOUR199322}. If the robber is invisible to the cops, the cop number equals the pathwidth of the graph (or vertex separation number) plus one \cite{KIROUSIS1985, KINNERSLEY1992}. In both versions, the winning strategy for the cops is monotone, namely the robber territory never increases. These kind of games are also formulated as graph searching problems, for a bibliography see \cite{FOMIN2008}. \\
Much work has been done mostly on undirected graphs, but different game versions (and their relation to graph parameters) are studied on directed graphs too, see for example \cite{Barat,Berwanger,Janssen,10.5555,YANG20081822}. 
The invisible version of the game on directed graphs has been studied in \cite{Barat}  and the cop number is shown to be either the directed pathwidth $pw(D)$ of the digraph or $pw(D) +1$. \\
Capture time of a strategy using a fixed number of cops has been introduced in \cite{BONATO20095588}. In most of the literature, the concept is related to the (original) variant of the game where cops have no helicopter, so they both follow the edges with speed one and the robber is visible. In \cite{BONATO20095588} they showed that for graphs with cop number one, the capture time is at most $n-4$, where $n$ is the number of vertices of the graph. Later, it was proved in \cite{BRANDT2020} that the bound $O(n^{k+1})$ is tight for graphs whose cop number is $k\ge 2$. The latter result extends to directed graphs, while \cite{KINNERSLEY2018} showed that there are directed graphs with cop number one whose capture time is $\Omega(n^2)$.

\section{Paper's  contributions and organization}

{Our first results show that it is NP-hard to compute $\mbox{cn}(D)$ and it is also NP-hard to compute the minimum number of cops needed  to capture the robber no-later than some given time step $l$. These two results were already claimed by the authors in \cite{ourpaper}, but as observed by {Amarilli et al. }\cite{error}, the proposed reductions need corrections. \\} 
 Since the cop number is NP-hard to compute, a possible next step would be to study fixed-parameter tractability of the problem. The first parameter that one can consider would be the cop number.  Our {next} result states that it is NP-hard to decide whether $\mbox{cn}(D) = 1$ showing that the problem cannot be FPT parameterized by the cop number. Another  consequence is that $\mbox{cn}(D)$ cannot be approximated within a factor of $2 - \epsilon$.  We show however that the problem is  FPT parameterized by $w + v_w$ where $w$ is a fixed integer and $v_w$  is the minimum size of a set whose deletion leaves an acyclic digraph where each weakly connected component contains at most $w$ vertices. We also give a tight upper bound for the number of arcs of a $k$-copwin digraph.  
Another research direction consists in looking for special cases where the cop number is easier to compute. We first prove that this holds for the class of round digraphs for which the cop number is exactly equal to the minimum outdegree. Then we focus on tournaments and show that it is easy to check whether a tournament is $1$-copwin. We also prove that the problem becomes FPT when parameterized by the size of a minimum feedback vertex set. Finally, some connections with the binary matrix mortality problem are pointed out. This allows to show that a digraph and its reverse digraph have the same cop number and the same capture time. 

The rest of the paper is organized as follows. Some preliminary observations are presented in Section \ref{sec:prem}. Section \ref{sec:general} is dedicated to general digraphs, while tournaments are studied in Section \ref{sec:tour}. Connections with binary matrix mortality are highlighted  in Section \ref{sec:morta} and some concluding remarks follow in the last section. 

\section{Preliminaries}
\label{sec:prem}

Let $k$ be an integer and consider the digraph $\Pi_k$, whose vertices correspond to the power set of $V$, having an arc $(V_1,V_2)$ for every $V_1,V_2\subseteq V$ such that $V_2\subseteq N^+(V_1)$ and $|N^+(V_1)\setminus V_2|\le k$. {There is also an arc from $V$ to all $W\subseteq V$ with $|W|\ge |V|-k$.}

\begin{observation}\label{piconstr}
    There exists a path, in $\Pi_k$, from $V$ to $\emptyset$ if and only if $k$ cops have a winning strategy
\end{observation}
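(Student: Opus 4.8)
The plan is to read the arcs of $\Pi_k$ as the feasible one-step transitions of the robber's territory under a $k$-cop move, and then match reachability of $\emptyset$ from $V$ in $\Pi_k$ with the existence of a winning strategy. First I would record the following dictionary. A standard arc $(V_1,V_2)$, with $V_2\subseteq N^+(V_1)$ and $|N^+(V_1)\setminus V_2|\le k$, encodes exactly one round of play at a step $i\ge 2$: starting from robber territory $R_{i-1}=V_1$, after the robber moves the possible positions are $N^+(V_1)$, and placing the $\le k$ cops on $N^+(V_1)\setminus V_2$ (a set of size at most $k$) leaves $R_i=N^+(V_1)\setminus W_i=V_2$. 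The extra arcs out of $V$ encode the asymmetric first round, where no out-neighbourhood is taken: choosing $W_1$ with $|W_1|\le k$ leaves $R_1=V\setminus W_1$, and $|R_1|\ge |V|-k$ is precisely the condition defining the special arcs from $V$.

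For the ``only if'' direction, suppose $k$ cops have a winning strategy $(W_i)_{i\ge 1}$ and let $R_1=V\setminus W_1$ and $R_i=N^+(R_{i-1})\setminus W_i$ be the induced territory sequence. Since the strategy is winning, $R_t=\emptyset$ for some first index $t$. Using the dictionary, the arc $V\to R_1$ is a special arc (because $|R_1|\ge |V|-k$) and each arc $R_{i-1}\to R_i$ for $2\le i\le t$ is a standard arc (because $R_i\subseteq N^+(R_{i-1})$ and $N^+(R_{i-1})\setminus R_i\subseteq W_i$ has size at most $k$). Hence $V,R_1,\dots,R_t=\emptyset$ is a walk from $V$ to $\emptyset$ in $\Pi_k$, and deleting closed subwalks yields a genuine path.

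For the ``if'' direction, I would take a path $V=U_0,U_1,\dots,U_t=\emptyset$ in $\Pi_k$ and read off a strategy. If the first arc $U_0\to U_1$ is a special arc, set $W_1=V\setminus U_1$ (so $|W_1|\le k$ and $R_1=U_1$); if it is only a standard arc, insert a harmless null move $W_1=\emptyset$ and realise $U_1$ one step later by $W_2=N^+(V)\setminus U_1$, giving $R_2=U_1$. In either case, every subsequent standard arc $U_{i-1}\to U_i$ is realised by $W=N^+(U_{i-1})\setminus U_i$, a set of size at most $k$, so that the induced territory passes through $U_1,\dots,U_t$ and reaches $\emptyset$; this is a winning $k$-cop strategy.

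The only real subtlety, and the step I would treat most carefully, is the mismatch between the first round and all later rounds: the recurrence $R_i=N^+(R_{i-1})\setminus W_i$ holds only for $i\ge 2$, whereas $R_1=V\setminus W_1$ involves no out-neighbourhood. This is exactly what forces the separate family of special arcs out of $V$, and it is why the ``if'' direction needs the small case distinction (or the inserted null move) rather than a uniform translation of arcs into rounds. Everything else is direct, round-by-round bookkeeping.
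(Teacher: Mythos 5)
Your proof is correct and follows essentially the same dictionary between territory sequences $R_1=V\setminus W_1$, $R_i=N^+(R_{i-1})\setminus W_i$ and paths in $\Pi_k$ that the paper's own proof uses. If anything, you are slightly more careful than the paper: its converse just sets $W_1=V\setminus R_1$, which silently assumes the path leaves $V$ along a special arc, whereas your inserted null move $W_1=\emptyset$ correctly handles a path whose first arc is a standard arc $(V,U_1)$ with $|V\setminus U_1|>k$ (possible when some vertex has in-degree zero, so $N^+(V)\subsetneq V$).
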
 
\begin{proof}
    Every {$k$ cops }winning strategy $(W_i)_{i\ge1}$ on $D$, minimizing the capture time, defines a path $V,R_1,R_2,...\emptyset$, where $R_i$ is the robber territory at step $i$: indeed  $\Pi_k$ contains the arc $(V,R_1)$, with $R_1=V\setminus W_1$ and arcs $(R_i,R_{i+1})$ for $i\ge 1$ with $R_{i+1}=N^+(R_i) \setminus W_{i+1}$.
Similarly, for every path from $V,R_1,R_2,...,\emptyset$ a winning strategy using $k$ cops is obtained using the sets {$W_1=V\setminus R_1$ and $W_i:=N^+(R_{i-1})\setminus R_{i}$ for $i\ge 2$.}
\end{proof}
\begin{corollary}\label{ctexp}
For any digraph $D=(V,A)$, $ct(D)\le 2^{|V|} - 1$.
\end{corollary}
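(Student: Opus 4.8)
The plan is to derive the corollary directly from Observation \ref{piconstr} by bounding the length of a shortest path in the auxiliary digraph $\Pi_k$.

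\medskip

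\noindent\textbf{Proof proposal.}
The idea is to realize the capture time as the length of a path in $\Pi_k$ and then bound that length by the number of vertices of $\Pi_k$. First I would take $k = \mbox{cn}(D)$, so that by definition a winning strategy using $k$ cops exists. By Observation \ref{piconstr}, this winning strategy corresponds to a path in $\Pi_k$ from $V$ to $\emptyset$; moreover, as noted in the proof of the observation, an optimal (capture-time-minimizing) strategy yields the path $V, R_1, R_2, \dots, \emptyset$, where $R_i$ is the robber territory at step $i$. The key observation is that the capture time $ct(D)$ is exactly the number of arcs traversed along this path (equivalently, the number of nonempty robber-territory sets $R_1, \dots, R_{\Omega-1}$ plus the final step reaching $\emptyset$), so bounding $ct(D)$ reduces to bounding the length of an optimal path in $\Pi_k$.

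\medskip

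\noindent Next I would argue that a shortest path in $\Pi_k$ from $V$ to $\emptyset$ can be taken to be simple, i.e., to visit no vertex of $\Pi_k$ twice. Indeed, if an optimal strategy produced a path revisiting some set $R_i = R_j$ with $i < j$, one could excise the loop $R_i, R_{i+1}, \dots, R_j$ and obtain a strictly shorter path (and hence a faster winning strategy), contradicting minimality of the capture time. Since the vertices of $\Pi_k$ are in bijection with the subsets of $V$, there are exactly $2^{|V|}$ of them, so a simple path can contain at most $2^{|V|}$ vertices, hence at most $2^{|V|} - 1$ arcs. Therefore $ct(D) \le 2^{|V|} - 1$.

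\medskip

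\noindent The main (minor) obstacle I anticipate is purely bookkeeping: making precise that the capture time equals the number of \emph{arcs} on the path rather than the number of \emph{vertices}, and handling the off-by-one conventions in the definitions of ``capture at time $t$'' and of the path $V, R_1, \dots, \emptyset$. Once the convention is fixed so that each arc of the path corresponds to one time step, the simple-path argument gives the stated bound $2^{|V|} - 1$ cleanly. No deeper idea beyond Observation \ref{piconstr} and the finiteness of the power set is needed.
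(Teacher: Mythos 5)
Your proof is correct and is exactly the argument the paper intends: Corollary \ref{ctexp} is stated without proof as an immediate consequence of Observation \ref{piconstr}, and your reasoning (an optimal winning strategy with $k=\mbox{cn}(D)$ traces a path $V,R_1,\dots,\emptyset$ in $\Pi_k$ whose arcs correspond to time steps, a shortest such path may be assumed simple by excising repeated territories, and $\Pi_k$ has $2^{|V|}$ vertices, so at most $2^{|V|}-1$ arcs) is precisely the intended derivation. The off-by-one bookkeeping you flag is handled correctly: the path $V,R_1,\dots,R_{\Omega}=\emptyset$ has $\Omega+1$ vertices and $\Omega$ arcs, giving $ct(D)=\Omega\le 2^{|V|}-1$.
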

The digraph $\Pi_k$ can be constructed by considering every $V_1 \subseteq V$ and, if $|N^+(V_1)|\ge k$, creating an arc from $V_1$ to all $V_2$ that can be obtained as $V_2=N^+(V_1)\setminus S$, with $S\subseteq N^+(V_1)$ and $|S|\le k$. Notice that the possible choices for $S$ are certainly upper bounded by $2^{|V|}$, therefore $\Pi_k$ can be constructed in time $O(4^{|V|})$. Moreover, a path from $V$ to $\emptyset$ can be easily found in time quadratic in the number of vertices of $\Pi_k$.
\begin{corollary}\label{exposolver}
It is possible to decide whether a given digraph $D=(V,A)$ is $k$-copwin in time $O(4^{|V|})$.
\end{corollary}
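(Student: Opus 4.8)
The plan is to reduce the decision problem to a single reachability query in the auxiliary digraph $\Pi_k$, and then account separately for the cost of building $\Pi_k$ and the cost of running the reachability test. By Observation \ref{piconstr}, the digraph $D$ is $k$-copwin if and only if there is a path from the vertex $V$ to the vertex $\emptyset$ in $\Pi_k$. Hence it suffices to construct $\Pi_k$ explicitly and then decide whether $\emptyset$ is reachable from $V$.

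First I would bound the cost of constructing $\Pi_k$. Its vertex set is the power set of $V$, so it has $2^{|V|}$ vertices. To generate the out-arcs of a fixed $V_1\subseteq V$, I compute $N^+(V_1)$ and enumerate every $V_2 = N^+(V_1)\setminus S$ with $S\subseteq N^+(V_1)$ and $|S|\le k$; the number of candidate subsets $S$ is at most $2^{|V|}$, so each $V_1$ contributes $O(2^{|V|})$ out-arcs. Summing over the $2^{|V|}$ choices of $V_1$ gives $O(4^{|V|})$ arcs in total, and producing them all costs $O(4^{|V|})$ time.

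Next I would bound the reachability test. Since $\Pi_k$ has $2^{|V|}$ vertices and $O(4^{|V|})$ arcs, a standard breadth-first or depth-first search from $V$ decides whether $\emptyset$ is reachable in time linear in the size of $\Pi_k$, i.e. $O(2^{|V|}+4^{|V|}) = O(4^{|V|})$. Combining the construction phase and the search phase yields the claimed overall bound of $O(4^{|V|})$.

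I do not expect a genuine obstacle here, as the statement is essentially a bookkeeping consequence of Observation \ref{piconstr} together with the explicit description of $\Pi_k$ already given. The only point that requires mild care is the counting argument: one must charge the per-vertex enumeration of out-arcs correctly, so that the total arc count, and therefore both the construction time and the search time, remain within $O(4^{|V|})$ rather than inflating to a larger exponential.
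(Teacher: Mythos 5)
Your proposal is correct and follows essentially the same route as the paper: construct $\Pi_k$ explicitly by enumerating, for each $V_1\subseteq V$, the at most $2^{|V|}$ sets $S\subseteq N^+(V_1)$ with $|S|\le k$, and then run a reachability test from $V$ to $\emptyset$, invoking Observation \ref{piconstr}. The only cosmetic difference is that you bound the search by a BFS/DFS linear in the size of $\Pi_k$ while the paper cites a bound quadratic in its number of vertices; both give $O(4^{|V|})$.
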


Let us add another obvious lemma that is helpful to derive lower bounds for the cop number.  

\begin{lemma}
    Let $D = (V,A)$ be a digraph, $p$ and $k$ be integer numbers such that $p\le |V|$ and $\forall S \subset V$ with $|S| = p$, $|N^+(S)| \geq p + k$, then $\mbox{cn}(D) \geq k+1$.
    \label{lem:obvious}
\end{lemma}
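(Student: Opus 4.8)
The plan is to establish the lower bound by showing that the stated out-degree expansion condition lets the robber keep a large territory forever, so $k$ cops can never win. Concretely, I would fix an arbitrary cop strategy using at most $k$ cops (so $|W_i| \le k$ for all $i$) and prove the invariant $|R_i| \ge p$ for every step $i \ge 1$, where $R_i$ is the robber territory. Since this forces $R_i \neq \emptyset$ for all $i$ whenever $p \ge 1$, the robber is never captured; by definition of the cop number this yields $\mbox{cn}(D) \ge k+1$.

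Before the induction I would record two elementary consequences of the hypothesis. First, since $N^+(S) \subseteq V$ for any set $S$ of size $p$, the inequality $|N^+(S)| \ge p+k$ immediately gives $|V| \ge p+k$; this is what makes the base case go through. Second, I would upgrade the hypothesis from sets of size exactly $p$ to all sets of size at least $p$: if $|T| \ge p$, choose any $S \subseteq T$ with $|S| = p$, and then monotonicity of the out-neighborhood, $N^+(S) \subseteq N^+(T)$, yields $|N^+(T)| \ge |N^+(S)| \ge p+k$. This extended form is exactly what the inductive step needs, since the territories $R_i$ need not have size precisely $p$.

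With these in hand the induction is routine. For the base case, recall $R_1 = V \setminus W_1$, so $|R_1| \ge |V| - |W_1| \ge (p+k) - k = p$. For the step, assuming $|R_{i-1}| \ge p$, the extended hypothesis gives $|N^+(R_{i-1})| \ge p+k$, and from $R_i = N^+(R_{i-1}) \setminus W_i$ with $|W_i| \le k$ we get $|R_i| \ge |N^+(R_{i-1})| - |W_i| \ge p$, closing the induction. I do not expect any genuine obstacle here; the only point requiring care is that the conclusion "$R_i \neq \emptyset$" needs $p \ge 1$, which is the meaningful regime of the statement (for $p = 0$ the invariant $|R_i| \ge 0$ is vacuous and gives no information). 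Thus, in the relevant case $p \ge 1$, the invariant $|R_i| \ge p \ge 1$ shows no $k$-cop strategy captures the robber, so $\mbox{cn}(D) \ge k+1$.
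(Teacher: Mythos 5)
Your proof is correct and follows essentially the same route as the paper's: an induction showing the robber territory never drops below $p$, using $|V| \ge p+k$ (with $|W_1|\le k$) for the base case and monotonicity of $N^+$ applied to a $p$-subset of $R_{i-1}$ for the inductive step. Your explicit caveat that the conclusion needs $p \ge 1$ is a small point of care the paper leaves implicit, but it does not alter the argument.
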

\begin{proof}
Inequality $|N^+(S)| \geq p + k$ implies that  $p+k \le |V|$. Assume that only $k$ cops are used. The lemma hypothesis  ensures that the size of the robber territory $R_t$ will never go below $p$. Indeed, at the first time step there are at least $p$ vertices outside $W_1$, and by simple induction, starting from $R_{t-1} \supset S$ with $|S| =p$, we get that $|R_t| \ge |N^+(  R_{t-1})| - k \geq |N^+(S)| - k \ge p + k - k = p$. It is therefore  impossible to guarantee robber capture using only $k$ cops.
\end{proof}

In other words, if $\mbox{cn}(D) \leq k$, then for each $p \le |V|$, there exists $S \subset V$ such that $|S| = p$ and $|N^+(S)| \leq p + k - 1$. 
Lemma \ref{lem:obvious} can be seen as an obvious generalization of the minimum outdegree bound mentioned in Introduction obtained by taking $p=1$.

\section{General digraphs}
\label{sec:general}
{First, we correct and simplify two NP-hardness proofs provided in \cite{ourpaper}. As observed by \cite{error},  the initial proofs were implicitely assuming strong NP-hardness of the partition problem (which is obviously wrong).  We slightly modify the proofs using a reduction from $3$-partition. {An instance of $3$-partition is a multiset $S$ of $n$ positive integers $\{a_1,...,a_n\}$ with $n=3m$ and one wants to decide whether there is a partition $S_1,...,S_m$ of $S$ such that the sum of the elements in each $S_j$ equals $\beta = 1/m \sum_{i=1}^n a_i$.
This problem is NP-hard even when the $a_i$ are bounded by a polynomial in $n$ and $\frac{\beta}{4} < a_i < \frac{\beta}{2}$ for $i=1,...,n$ \cite{gareyjohnson79}. Note that the latter condition implies that the $S_j$ are triplets. {Finally, observe that we can assume $a_i > 2$ for every $i=1,...,n$, if this is not the case consider the equivalent instance $\{3a_1,...,3a_n\}$}.

\begin{theorem} 
It is NP-hard to compute the number $cn(D)$ of a given digraph $D$ even
if it does not contain loops
\label{th=cn}
\end{theorem}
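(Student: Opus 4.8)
The plan is to reduce from $3$-partition. The essential reason to prefer $3$-partition over the Partition-based argument of \cite{ourpaper} is that $3$-partition is \emph{strongly} NP-hard: the $a_i$ may be assumed polynomially bounded in $n$, so I may safely build gadgets whose size is proportional to the $a_i$ and still obtain a digraph $D$ of size polynomial in $n$ (recall $\sum_{i=1}^n a_i = m\beta$). The guiding idea is to make $\mbox{cn}(D)$ behave like a \emph{makespan}: $D$ will be designed so that any winning strategy is forced to assign each integer $a_i$ to one of $m$ synchronized ``clearing windows'', the number of cops used in a window being the sum of the $a_i$ assigned to it, and $\mbox{cn}(D)$ equalling the smallest achievable maximum load over the $m$ windows. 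Since the total load is $m\beta$, this maximum is always at least $\beta$, with equality precisely when every window can be loaded to exactly $\beta$, i.e. precisely when the $3$-partition instance is feasible. I therefore fix the threshold $K=\beta$ and aim to prove that $\mbox{cn}(D)\le \beta$ if and only if the instance is a YES-instance; this makes the decision problem ``$\mbox{cn}(D)\le K$?'' NP-hard and hence computing $\mbox{cn}(D)$ NP-hard.

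Concretely, $D$ would consist of a loopless \emph{selector} component, a short synchronizing structure that exposes exactly $m$ distinct windows along the timeline, together with one \emph{item gadget} $G_i$ per integer $a_i$. Each $G_i$ is built so that (i) its contribution to the robber territory can be driven to $\emptyset$ only during a window, (ii) doing so costs $a_i$ cops placed \emph{simultaneously}, because at the critical step the $a_i$ relevant vertices have pairwise disjoint out-neighbourhoods, so Lemma~\ref{lem:obvious} forbids covering them with fewer cops, and (iii) partial work on $G_i$ performed outside its chosen window is ``refreshed'' (undone) by the selector, so progress cannot be spread cheaply over time but must be concentrated inside a single window. Care is needed to keep $D$ loopless (every self-loop in the intuitive clock is replaced by a short directed cycle) and of polynomial size, which holds because each $G_i$ has $O(a_i)$ vertices and the $a_i$ are polynomially bounded. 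The constraints $\tfrac{\beta}{4} < a_i < \tfrac{\beta}{2}$ and $a_i > 2$ guarantee that a window loaded to exactly $\beta$ receives exactly three items, matching the triplet structure of $3$-partition.

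The verification splits into the two usual directions. For the \emph{completeness} direction I assume a valid partition $S_1,\dots,S_m$ with $\sum_{i\in S_j}a_i=\beta$, and I exhibit the strategy that, during window $j$, clears simultaneously all gadgets $G_i$ with $i\in S_j$, using $\sum_{i\in S_j}a_i=\beta$ cops; one then checks, via the territory recursion $R_t=N^+(R_{t-1})\setminus W_t$, that after the $m$-th window the territory is empty, so this is a winning strategy with $\beta$ cops and $\mbox{cn}(D)\le\beta$. For the \emph{soundness} direction I start from a winning strategy using at most $\beta$ cops and argue that it induces a feasible partition: property (iii) forces each item to be cleared \emph{atomically} inside one window, property (ii) charges $a_i$ cops to that window, and the cop budget $\beta$ then bounds the total load of each window by $\beta$; since the overall load is $m\beta$ spread over $m$ windows, every window carries load exactly $\beta$, and the items cleared in window $j$ form a block summing to $\beta$, yielding the desired $3$-partition.

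I expect the main obstacle to be the correctness of the item gadget, specifically enforcing properties (ii) and (iii) simultaneously: I must design $G_i$ together with the selector so that the $a_i$ critical vertices are genuinely all in the robber territory at the window and genuinely require $a_i$ distinct cops (the clean route is an expansion argument through Lemma~\ref{lem:obvious}), while ensuring that no \emph{non-monotone} trick lets the cops reduce a gadget gradually across several windows or clear it outside a window --- equivalently, that the refresh mechanism really erases any premature progress without itself injecting territory the cops cannot afford. Getting the accounting exact (so that $\mbox{cn}(D)$ is \emph{equal} to the makespan, not merely sandwiched near it) and confirming that looplessness is preserved are the remaining technical points; once the gadget is pinned down, both directions reduce to bookkeeping on the recursion $R_t=N^+(R_{t-1})\setminus W_t$ and on Observation~\ref{piconstr}.
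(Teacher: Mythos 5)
Your high-level plan is the same as the paper's (a reduction from $3$-partition exploiting strong NP-hardness, item gadgets of size $a_i$ that must be cleared ``atomically'' in one of $m$ windows, a selector forcing the window structure, and the makespan-equals-$\beta$ accounting), but the proof is not actually there: the item gadget and the selector, which you yourself flag as ``the main obstacle,'' are never constructed, and everything in your soundness direction hinges on properties (ii) and (iii) that you only postulate. Worse, the one concrete mechanism you do propose for (ii) would not work as stated. Lemma~\ref{lem:obvious} lower-bounds the cop number of the \emph{whole} digraph; it cannot, by itself, force the $a_i$ cops to be placed \emph{simultaneously} inside a particular window, which is the whole point of atomicity. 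The paper obtains simultaneity by the opposite design: each item gadget is the biorientation of the complete graph on $a_i$ vertices (shared, not disjoint, out-neighbourhoods), so that if a cop ever enters $V_i$ while some vertex of $V_i$ is uncovered, the clique regenerates and one proves the cumulative inequality $\sum_s |W_s\cap V_i|>|V_i|$ unless $W_{t_i}=V_i$ at the first touch; summing over $i$ against the budget of $\beta$ cops per step, and using $\beta/4<a_i<\beta/2$ and $a_i>2$, tightness forces exactly $m$ steps on $V$, each covering whole cliques summing to exactly $\beta$, i.e.\ a valid $3$-partition.

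Your ``windows'' also need a mechanism, and the paper's is worth noting because it is not built into the graph as a clock: the selector is a chain of independent sets $I_1\to\cdots\to I_{m+1}$, each of size $\beta$, with all arcs $V\to I_1$ and $I_{m+1}\to V$ (this also realizes your refresh property (iii) looplessly: a robber reaching $I_{m+1}$ regenerates all of $V$). The window structure then \emph{emerges} from a pigeonhole: whenever the territory meets $V$, the robber threatens a run down the chain, so within $m$ steps the cops must fully occupy some $I_j$ --- a step with all $\beta$ cops off $V$ --- and blocking steps associated to distinct threat times are distinct. Hence among the first $m+t+1$ steps at most $m$ can place cops in $V$, which contradicts the $\ge m+1$ steps required by the counting claim when the instance is a NO-instance; an induction then shows $R_t\cap V\neq\emptyset$ forever. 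This two-sided bookkeeping (cumulative clique covering versus chain-blocking steps) is precisely the content missing from your proposal; without a pinned-down gadget and this interaction, your soundness direction --- in particular the claim that non-monotone strategies cannot spread the clearing of a gadget across windows --- remains an unproved assertion rather than a proof.
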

\begin{proof}
Consider an instance of $3$-partition as described above. {We assume that the $a_i$ numbers are bounded above by a polynomial in $n$.}
We construct a loop-free digraph $D$ such that $cn(D)=\beta$ if and only if $\{a_1,...,a_n\}$ is a yes instance of $3$-partition.
The digraph $D$ contains, for each $i=1, ..., n$, the biorientation of the complete graph on $a_i$ vertices. Let $V_1,...,V_n$ be their vertex sets and let $V:=\bigcup_i V_i$. 
The digraph $D$ also contains $m+1$ disjoint copies of an independent set on $\beta$ vertices. Let $I_1, ...,I_{m+1}$ be their vertex sets. The digraph $D$ contains arcs from every vertex in $I_j$ to every vertex in $I_{j+1}$ for $j=1, ..., m$; arcs from every vertex in $V$ to every vertex in $I_1$ and arcs from every vertex in $I_{m+1}$ to every vertex in $V$ {(see Figure \ref{fig:3partreduc} for illustration)}.
\begin{figure}[htbp]
\centering
\vspace{-3mm}
\includegraphics
[width=60mm,height=80mm]{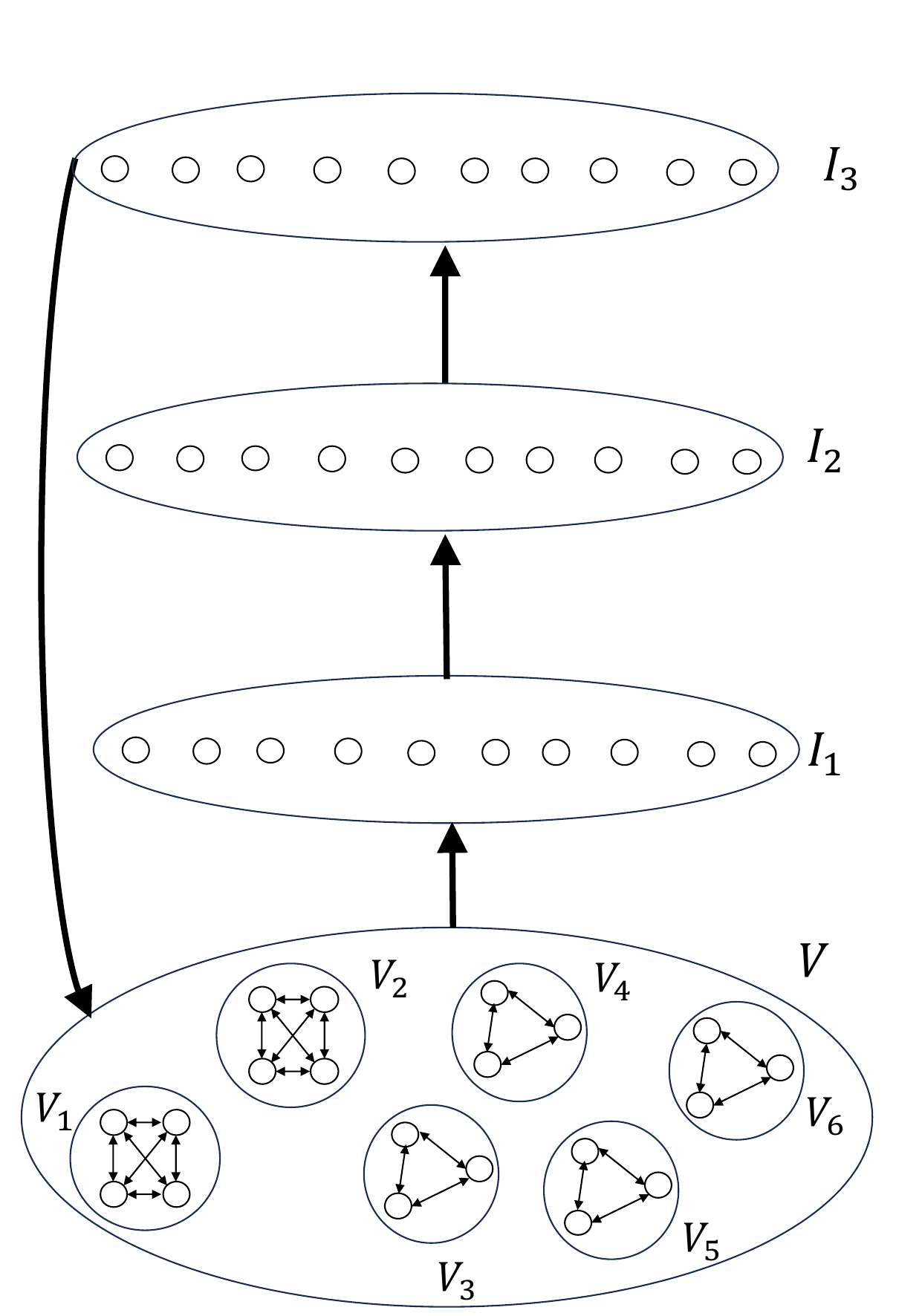}
    \caption{ Reduction from $3$-partition: $a_1 = a_2 =4$, $ a_3 = a_4 = a_5 = a_6 = 3$, $\beta = 10$, $n = 6$,$m = 2$, $|V_i|=a_i, 1 \leq i \leq 6$, $|I_j| = \beta, 1 \leq j \leq 3$.   Thick arrows indicate all arcs from a set to the other}
\label{fig:3partreduc}
    \end{figure}
We show that $cn(D) =  \beta$ if and only if the 3-partition instance is true. {Observe that the construction is polynomial in $n$ since the $a_i$ numbers are polynomially bounded. }\\
Assume that the 3-partition instance is true and consider the corresponding partitioning into triplets $S_1,...,S_m$. If $S_j = \{a_{j_1}, a_{j_2},  a_{j_3}\}$ is any  triplet of the partitioning, then $|V_{j_1}| + |V_{j_2}| + |V_{j_3}| = a_{j_1} + a_{j_2} +a_{j_3}  = \beta$. 
{A winning strategy using $\beta$ cops is as follows: at steps from $1$ to $m$ play $I_1$; at steps $m+1\le i \le 2m$ play $\bigcup_{h\in S_i}V_h$; at steps from $2m+1$ to $3m$ play $I_{m+1}$. To see that cops win, observe that $R_1=V\cup I_2 \cup ... \cup I_{m+1}; \ R_2=V \cup I_3 \cup ... \cup I_{m+1};\ ...\ ;\ R_m=V\cup I_{m+1}; \ R_{m+1}= I_1 \cup ( \bigcup_{i \not \in S_1} V_i );\ ...\ ;\ R_{2m}=I_1 \cup ... \cup I_{m};\ R_{2m+1}=I_2 \cup ... \cup I_{m};\ ... \ ; \ R_{3m}=\emptyset$. Therefore this is a winning strategy on $D$ using $\beta$ cops, which are necessary since $\delta^+(D)= \beta$.} \\
Let us now assume that $3$-partition is false and assume, by contradiction, there is a cop winning strategy  $(W_t)_{t \geq 1}$  using $\beta$ cops.  Observe that a time translation of a winning strategy  is still winning: the strategy defined by $W'_{t+c} = W_{t}$ for $t \ge 1$ and $W'_t$ is arbitrary for $1 \leq t \leq c$ is a winning strategy. Without loss of generality, we can  then consider a shortest winning  strategy under the condition  $W_t = I_1$ for the first $m+1$ time steps. Observe that $R_{m+1} = V$. 
Moreover, by optimality of the strategy, {the robber territory, from time $m+2$ on, should never intersect $I_{m+1}$} since this would allow the robber to be at any node of $V$ in the next step, contradicting the optimality of the winning strategy.
We can also assure that at time $m+2$, some cops are in $V$ and $R_{m+2} \cap I_1 \neq \emptyset$ {(since otherwise $R_{m+2}$ would contain at least the whole set $V$, contradicting again the strategy's optimality)}.  
To simplify the rest of the proof, time step   $m+2$ can be considered as our first time step. {We can then assume that $W_1 \cap V \neq \emptyset$ and $R_1 = (V \cup I_1) \setminus W_1$. }
We show by induction that  $R_{m+t} \cap V \neq \emptyset$  for any $t \geq 0$.\\
{Let us first show that at least $m+1$ time steps are needed to make $V$ robber-free.}
\begin{claim}
    {If the 3-partition instance is false and  $R_{\tau}\cap V = \emptyset$ for some time step $\tau$}, then $W_{s}\cap V\neq \emptyset$ for at least $m+1$ different time steps between $1$ and $\tau$.
    \label{cl:false}
\end{claim}
\begin{proof}
   Let $\gamma(\tau)$ be the total number of cops placed in $V$ between time step $1$ and $\tau$. Since $V_i$ is the biorientation of a complete graph, $\sum_{s=1}^{\tau} |W_s\cap V_i| \ge |V_i|$. Indeed consider the first time $t_i$ some cop is in $V_i$: if at that moment at least one vertex of $V_i$ is not occupied by a cop, then at time $t_i+1$ at least $|V_i|-1$ vertices of $V_i$ can potentially enter the robber territory and so at least other $|V_i|-1$ cops need to be on $V_i$ at a time step {between $t_i+1$ and $\tau$}. 
   Observe also that if at least two vertices of $V_i$ are not occupied by cops at step $t_i$, then the entire $V_i$ can potentially enter the robber territory at step $t_i+1$. It follows that either ${W_{t_i}=V_i}$ or $\sum_{s=1}^{\tau} |W_s\cap V_i| > |V_i|$.  \\
   We have that
   \begin{equation}\label{gamma}
       \gamma(\tau)=\sum_{i=1}^n\sum_{s=1}^{\tau} |W_s\cap V_i| \ge \sum_{i=1}^n|V_i|= \beta m.
   \end{equation} Suppose, by contradiction, that cops use vertices of $V$ in less than $m+1$ different steps. This means they do it in exactly $m$ time steps and in each time step they use exactly $\beta$ vertices from $V$. Moreover, every time some cop is in $V_i$, all the vertices of $V_i$ are occupied by cops, otherwise the inequality in (\ref{gamma}) would be strict {(recall $a_i>2$)}, which is impossible using $\beta$ cops on $V$ in at most $m$ time steps. Finally note that cops cannot occupy more than three different sets $V_i$ on the same time step because the size of each $V_i$ is strictly greater than $\frac{\beta}{4}$. We can thus conclude that cops are in $V$ for exactly $m$ different time steps $s_1,...,s_m$ and in each $s_j$ they occupy entirely some different $V_{j_1},V_{j_2},V_{j_3}$ such that $a_{j_1}+a_{j_2}+a_{j_3}=\beta$. This implies that the $3$-partition instance is true, which is a contradiction.
\end{proof}
The claim implies the result is true for $t=0$. Assume that $R_{t'} \cap V \neq \emptyset$  for any $t' \leq m+t$ and prove that $R_{m+t+1} \cap V \neq \emptyset$. 
For $t' \leq m+t$, in order to avoid $R_{t'+m}$ intersects 
$I_{m+1}$, cops must fully occupy one of the $I_j$ no later than $m$ time steps, more precisely there exists $j(t')$ with $0 \leq j(t') \leq m$ such that $I_{1+j(t')}=W_{t'+j(t')}$. We can thus say that for any $t' \leq m+t$, there exists $j(t')$ such that $W_{t'+j(t')} \cap V = \emptyset$. Notice that  for $t''\neq t' \le m+t$, we necessarily have $t''+j(t'')\neq t'+j(t')$. {Indeed, robbers starting their  move  from $V$ to $I_{m+1}$ at different times steps cannot simultaneously be at the same $I_j$ implying that they cannot be intercepted at the same time}. 
As a consequence, if we consider 
$t' \leq t {+1}$, 
there will be at least $t+1$ different time steps between $1$ and $m+t+1$ where no cop is in $V$. Said another way, there are at most $m$ time steps between $1$ and $m+t+1$ where some cop is in $V$. Since this is not sufficient to cover all nodes of $V$ {(by Claim \ref{cl:false})}, $R_{m+t+1} \cap V \neq \emptyset$ proving the induction and leading to a contradiction since the strategy is not a winning one.
\end{proof}
}

{
Let $cn(D,l)$ denote the minimum number of cops required to capture the robber no later than time step $l$. We show that $cn(D,l)$ is difficult to compute. We adapt the proof of  \cite{ourpaper} by using a reduction from 3-partition instead of partition.  
} 
{
\begin{theorem}
    It is NP-hard to compute the number $cn(D,l)$ of a given digraph $D$ when $l$ is part of the input.
    \label{npcl}
\end{theorem}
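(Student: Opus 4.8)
The plan is to give a direct reduction from $3$-partition in which the deadline $l$ forces a \emph{scheduling} of the clearing operations. Starting from an instance $\{a_1,\dots,a_n\}$ with $n=3m$, $\frac{\beta}{4}<a_i<\frac{\beta}{2}$ and $\sum_i a_i=\beta m$, I would build a very simple digraph $D$ (with loops allowed): take $n$ disjoint complete digraphs carrying a loop on every vertex (equivalently, the biorientation of $K_{a_i}$ together with all loops), with vertex sets $V_1,\dots,V_n$ of sizes $a_1,\dots,a_n$, and set $V:=\bigcup_i V_i$. There are no arcs between distinct $V_i$. Finally set the deadline $l:=m$. I claim $cn(D,m)=\beta$ exactly when the instance is a yes-instance, and $cn(D,m)\ge\beta+1$ otherwise; since this decides $3$-partition, computing $cn(D,l)$ is NP-hard. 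The construction is polynomial because $|V|=\beta m$ and the $a_i$ are polynomially bounded.

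The key structural fact, and the step I would prove first, is an ``all-or-nothing'' clearing lemma: because each $V_i$ carries all loops, $N^+(R\cap V_i)=V_i$ whenever $R\cap V_i\neq\emptyset$, and because $V_i$ has no incoming arcs from outside, the robber territory satisfies $R_t\cap V_i=\emptyset$ if and only if $W_s\supseteq V_i$ for some $s\le t$. In words, the only way to make $V_i$ robber-free is to occupy \emph{all} of $V_i$ at a single time step, and once this is done $V_i$ stays clear. Since $R_1=V\setminus W_1$ (the robber may start anywhere), any winning strategy must pick, for each $i$, a ``clearing step'' $s_i$ with $W_{s_i}\supseteq V_i$, and the capture time equals $\max_i s_i$.

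Given this lemma, capturing within $l=m$ steps with $c$ cops is equivalent to grouping $\{V_1,\dots,V_n\}$ into at most $m$ batches (one per used clearing step), each of total size at most $c$. A double-counting argument then pins the cop number down: for each clearing step $s$ the sets $V_i$ cleared at $s$ are disjoint, so $|W_s|\ge\sum_{i:s_i=s}a_i$, and summing over the at most $m$ distinct clearing steps gives $\beta m=\sum_i a_i\le \sum_s|W_s|\le mc$, whence $c\ge\beta$ for \emph{every} instance. Consequently $cn(D,m)\le\beta$ forces a partition of $\{a_1,\dots,a_n\}$ into at most $m$ groups each of sum at most $\beta$; as the total is exactly $\beta m$ this means $m$ groups each of sum exactly $\beta$, and the constraint $\frac{\beta}{4}<a_i<\frac{\beta}{2}$ makes every group a triplet, i.e. precisely a $3$-partition. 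Conversely a $3$-partition yields the strategy that covers one triplet (exactly $\beta$ vertices) per step for $m$ steps, so $cn(D,m)\le\beta$. This establishes $cn(D,m)=\beta$ in the yes-case and $cn(D,m)>\beta$ otherwise.

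The main obstacle I anticipate is not the combinatorics of the grouping but the careful bookkeeping of the timing: one must verify the all-or-nothing lemma rigorously, and in particular rule out a cheaper multi-step erosion of a $V_i$. This is exactly what the loops prevent; without them the single-vertex trick would let one clear $V_i$ with $a_i-1$ cops over two steps and destroy the clean correspondence. One must also confirm the off-by-one in the deadline, namely that ``capture no later than step $l$'' corresponds to $\max_i s_i\le m$. As a sanity check one may instead reuse the loopless digraph of Theorem \ref{th=cn} with deadline $l=3m$: there the displayed $\beta$-cop strategy captures in exactly $3m$ steps in the yes-case while $cn(D)>\beta$ forbids any capture in the no-case, so $cn(D,3m)=\beta$ iff the instance is positive; this gives an alternative proof, though it leans on the full analysis of Theorem \ref{th=cn} rather than isolating the role of the deadline.
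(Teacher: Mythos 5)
Your proposal is correct, and it takes a genuinely leaner route than the paper's proof while sharing the same reduction skeleton. The paper also reduces from $3$-partition using the loop-carrying cliques $V_1,\dots,V_n$, but it additionally attaches a disjoint biorientation of $K_\beta$ (with loops), call it $Z$, and sets the deadline to $l=m+1$: the $Z$ gadget supplies the baseline lower bound $cn(D)\ge \delta^+(D[Z])=\beta$, the yes-case strategy clears one triplet per step and finishes on $Z$ at step $m+1$, and the no-case is handled by invoking the counting argument of Claim \ref{cl:false} from Theorem \ref{th=cn} (at least $m+1$ steps must intersect $V$, plus one full step on $Z$, exceeding the deadline). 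You dispense with $Z$ entirely, set $l=m$, and replace the appeal to Claim \ref{cl:false} by your self-contained all-or-nothing lemma -- which is sound: loops make $N^+(\lbrace v\rbrace)=V_i$ for every $v\in V_i$, so $R_t\cap V_i=\emptyset$ exactly when some $W_s\supseteq V_i$ with $s\le t$, and absence of incoming arcs keeps cleared components clear -- together with the averaging bound $\beta m=\sum_i a_i\le \sum_{s=1}^m |W_s|\le mc$. What each approach buys: the paper's version recycles machinery already proved for Theorem \ref{th=cn} (whose Claim \ref{cl:false} is the delicate \emph{loop-free} analysis requiring $a_i>2$, an assumption your construction does not need), whereas yours is independent of Theorem \ref{th=cn}, and it isolates the role of the deadline more sharply: in your digraph $cn(D)=\max_i a_i<\beta/2$ without a deadline, so the hardness demonstrably comes from the time budget alone, while in the paper's construction $cn(D)=\beta$ regardless. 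Your bookkeeping of the deadline (capture no later than $l$ iff $R_l=\emptyset$ iff every $V_i$ is fully occupied at some step $\le l$) is consistent with the paper's definitions, and your fallback observation -- reusing the loop-free digraph of Theorem \ref{th=cn} with $l=3m$, where $cn(D,3m)=\beta$ iff the instance is positive since $\delta^+(D)=\beta$ -- is also valid, though, as you note, it imports the full analysis of that theorem.
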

\begin{proof}
Let $\{a_1,...,a_n\}$ be an instance of the $3$-partition problem where the $a_i$ are bounded by a polynomial in $n$ and are between $\beta/4$ and $\beta/2$. We still have $n = 3m$ and $\beta = (\sum_{i} a_i) 1/m$.    We construct a digraph $D$ such that $cn(D,m+1)= \beta$ if and only if $\{a_1,...,a_n\}$ is a yes instance of 3-partition.\\
The digraph $D$ contains, for each $i=1, ..., n$, the biorientation of the complete graph on $a_i$ vertices. Let $V_1,...,V_n$ be their vertex sets. The digraph $D$  contains also a biorientation of the complete graph on $\beta$ vertices. Let $Z$ denote its vertex set.  
All vertices of $D$ have also a loop.  Observe that $cn(D) \geq \delta^+(D[Z])=\beta$ implying that $cn(D,m+1) \geq  \beta$. \\
Assume that the 3-partition instance is true  and consider the corresponding partitioning into triplets $S_1, ..., S_m$. Consider the cop strategy defined by $W_i = \bigcup\limits_{j | a_j \in S_i} V_j$ for $1 \leq i \leq m$ and $W_{m+1} = Z$. This strategy allows to capture the robber no later than time $m+1$. Hence, $cn(D,m+1)=\beta$.\\
Let us now assume that the 3-partition instance is false. Assume also that the number of used cops is $\beta$. One can then use again the same arguments of Claim \ref{cl:false} proof to deduce that any winning strategy should intersect $V = \bigcup_i V_i$ during at least $m+1$ different time steps. Since we need one time step with all cops in $Z$, any winning strategy using $\beta$ cops  requires at least $m+2$ time steps. At least one more cop is needed to guarantee capture no later than time step $m+1$: $cn(D,m+1) \geq 1+\beta$. 
\end{proof}
}

By using a subdivision technique, we show it is NP-hard even to decide whether a given digraph is $1$-copwin.

\begin{theorem} \label{1ishard}
The problem of deciding wheter a given digraph $D$ is $1$-copwin is NP-hard. 
\end{theorem}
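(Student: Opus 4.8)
The plan is to reduce from the $1$-copwin problem's natural hardness source: deciding $\mbox{cn}(D)=1$ must separate digraphs needing exactly one cop from those needing at least two. Since Theorem~\ref{th=cn} already establishes NP-hardness of computing $cn(D)$ via a $3$-partition reduction producing a digraph with $cn(D)=\beta$ in the yes-case and $cn(D)\ge\beta+1$ in the no-case, the natural strategy is to take that gadget (or a similar $3$-partition gadget) and apply a \emph{subdivision} transformation that compresses the threshold down to $1$ versus $2$. The header explicitly announces ``by using a subdivision technique,'' so the intended approach is to replace each vertex (or each high-outdegree bundle) by a long directed structure so that a single cop, moving along subdivided arcs, can sequentially handle what $\beta$ cops handled in parallel---but only when the $3$-partition instance admits a valid tripartition.

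First I would design the subdivision gadget: the key mechanism of the game is that a single cop can ``sweep'' a subdivided path because the robber's territory on a long induced path shrinks predictably (recall $cn$ of a path is $0$, and one cop suffices on the cycle-with-loops example in Table~\ref{tab:cop}). The idea is that subdividing an arc forces the robber to spend several time steps transiting it, giving one cop enough time to reposition and intercept. I would arrange the gadget so that the $\beta$ parallel vertices of each independent set $I_j$ or each clique block $V_i$ in the Theorem~\ref{th=cn} construction are serialized into a path-like structure whose traversal time exactly matches the ``budget'' of time steps that a single cop has while the robber migrates. The yes-instance of $3$-partition should then correspond to a feasible scheduling in which the lone cop always arrives ``just in time'' at each bottleneck, whereas a no-instance creates at least two simultaneous bottlenecks that one cop provably cannot cover, forcing $\mbox{cn}(D)\ge 2$.

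The verification splits into the two standard directions. For the yes-direction I would exhibit an explicit single-cop winning strategy: write down the sequence $(W_t)$ (each $W_t$ a singleton) that tracks the robber territory $R_t=N^+(R_{t-1})\setminus W_t$ down to $\emptyset$, mirroring the triplet schedule $S_1,\dots,S_m$ as in the proof of Theorem~\ref{th=cn} but now executed serially rather than in parallel. For the no-direction I would argue, in the spirit of Lemma~\ref{lem:obvious} and Claim~\ref{cl:false}, that if no valid tripartition exists then at some point two ``live'' robber threads occupy independent parts of the subdivided graph at the same time step, so a single cop cannot shrink the territory; more carefully, I would adapt the counting argument of Claim~\ref{cl:false} to show the robber territory never empties under one cop, hence $\mbox{cn}(D)\ge 2$. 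As a sanity constraint I must ensure the resulting $D$ is buildable in polynomial time, which follows from the polynomial bound on the $a_i$.

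The main obstacle I anticipate is calibrating the subdivision lengths and the interconnection pattern so that the timing is \emph{tight in both directions simultaneously}: the subdivided structure must give one cop exactly enough slack to succeed in every yes-instance while leaving no slack at all to rescue any no-instance. A serialized cop is strictly weaker than $\beta$ parallel cops, so naively subdividing will likely make \emph{every} instance $1$-copwin or \emph{none} of them; the delicate part is engineering synchronization ``choke points'' whose feasibility is equivalent to the existence of a perfect triple-grouping, and proving a matching lower bound that one cop cannot cover two choke points active at once. Getting the loop structure right (the theorem allows loops in general, though the statement here is for general digraphs) and controlling the capture-time blow-up---which Corollary~\ref{ctexp} bounds only by $2^{|V|}$---so the strategy description stays polynomial-size will require the most care.
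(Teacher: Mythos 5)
Your proposal correctly identifies that a subdivision technique is involved, but it stops at a plan and the plan points in the wrong direction. The paper does \emph{not} build a bespoke $3$-partition gadget with calibrated ``choke points''; it reduces from the already-NP-hard decision problem ``is $\mbox{cn}(D)\le k$?'' (Theorem \ref{th=cn}) by a single uniform transformation: every vertex $v$ of $D$ is replaced by a directed path $v^1,\dots,v^k$, with an arc $(u^k,v^1)$ for each arc $(u,v)$ of $D$. The point you flag as the fatal calibration problem --- ``a serialized cop is strictly weaker than $\beta$ parallel cops, so naively subdividing will likely make every instance $1$-copwin or none of them'' --- is exactly where your intuition fails: subdividing vertices into $k$-paths slows the robber by a factor of $k$, which precisely compensates the serialization, and the equivalence $\mbox{cn}(D)\le k \iff \mbox{cn}(\tilde D)\le 1$ holds \emph{generically} (assuming no out-degree-zero vertices), with no instance-specific engineering. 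The one cop simulates the $k$ cops $W_i=\{v_{i_1},\dots,v_{i_k}\}$ by playing the diagonal batch $v_{i_1}^1,\dots,v_{i_k}^k$ for $i=1,\dots,\Omega$ where $\Omega=2^{|V|}-1$ bounds the capture time (Corollary \ref{ctexp}); since the invisible robber may start at any level $u^p$ of a subdivided path, the cop inserts a one-step ``pause'' (playing $\emptyset$) between repetitions, shifting the phase so that after at most $k$ repetitions the robber's level is synchronized with the simulation and the capture argument for level-$1$ starts applies. The converse direction groups each batch of $k$ consecutive single-cop moves into one move of $k$ cops on $D$. None of these three ingredients --- reducing from $k$-copwin rather than from $3$-partition, the diagonal-batch simulation, and the phase-correcting pause --- appears in your proposal, and they constitute the entire proof.

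Two further points. First, your plan defers precisely the hard part: you never construct the gadget, never exhibit the single-cop strategy, and your no-direction is an appeal ``in the spirit of Claim \ref{cl:false}'' whose counting argument (cops must cover cliques $V_i$ entirely, within $m$ visits to $V$, with budget $\beta$ per step) has no evident analogue once everything is serialized for one cop; so as written there is no proof, only an unverified research program whose central obstacle you yourself concede is open. Second, your worry about keeping the strategy description polynomial-size is a red herring: an NP-hardness reduction must produce the \emph{instance} $\tilde D$ in polynomial time (it does, since $k\le |V|$), but the winning strategy may be exponentially long, and in the paper it is, lasting up to $k^2\Omega+k-1$ steps.
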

\begin{proof}
First of all, observe that we can assume input digraphs have no vertices with outdegree zero, since deleting such vertices does not change the cop number. \\
We show NP-hardness by providing a reduction from the problem of deciding, for a given pair $(D,k)$, whether $\mbox{cn}(D)\le k$. 
Given $D=(V,A)$, we subdivide all its vertices creating a path on $k$ vertices. More precisely, for every $v\in V$, the new digraph $\tilde{D}$ contains $k$ copies of $v$ denoted by $v^1,...,v^k$ and arcs $(v^i, v^{i+1})$ for $i=1,...,k-1$. The digraph $\tilde{D}$ contains also an arc $(u^k,v^1)$  for every arc $(u,v) \in A$. Notice that since $k \leq |V|$, the construction is polynomial.  We claim that $\mbox{cn}(D)\le k$ if and only if $\mbox{cn}(\tilde{D})\le 1$.\\
Assume that $\mbox{cn}(D)\le k$, let $(W_i)_{i\ge1}$ be a winning strategy using $k$ cops on $D$ that minimizes the capture time: by {Corollary} \ref{ctexp} the capture time is at most $\Omega:=2^{|V|}-1$; denote by $ v_{i_1},...,v_{i_k}$ the vertices of $W_i$, for every $1\le i \le \Omega$. \\
We will show that a single cop wins on $\tilde{D}$ by simulating $k$ times (separated by a "pausing" move) the strategy $W_1,...,W_\Omega$: the strategy is simulated using diagonal batches of subdivided vertices, where in each element of the batch the cop simulates one of the $k$ cops, more precisely the cop plays $v_{i_1}^1,...,v_{i_k}^k$, repeated for $1\le i \le \Omega$; then at step $k\Omega+1$ he plays the empty set and then replays $v_{i_1}^1,...,v_{i_k}^k$, for $1\le i \le \Omega$; at step $2k\Omega+2$ he plays the empty set and replays again, repeating until step $k^2\Omega+k-1$. 
An illustration of the construction of $\tilde{D}$ and the simulated  winning strategy is provided after the proof.
\begin{claim}
  A robber on $\tilde{D}$, starting in the first vertex of a subdivided path (a vertex $u^1$, for some $u\in V(D)$) is captured when playing against a simulation of \ $W_1,...,W_\Omega$.  
\end{claim}
\begin{proof}
 The robber must spend $k$ steps through the first subdivided vertex, then spend $k$ steps through the next subdivided vertex, etc. so his strategy is of the type $r_{1}^1,...,r_{1}^k,\\ r_{2}^1,...,r_{2}^k,...$, with $(r_i)_{i\ge 1}$ defining a robber strategy on $D$ which eventually leads to capture. This means there exist integers $t\ge 1$ and $1\le j \le k$, such that $r_t=v_{t_j}$. With respect to the game on $\tilde{D}$, at step $(t-1)k + j$ the robber is on $r_t^j$ and the cop is on $v_{t_j}^j=r_t^j$, so it gets captured.   
\end{proof}

It follows that if the robber starts in a vertex $u^1$ it gets captured in the first simulation; in case the robber first vertex is of the type $u^p$, with $2\le p \le k$, after $k-p+1$ simulations, namely at step $(k-p+1)k\Omega+k-p+1$, we will be in the hypothesis of Claim 1, since the apex of the robber vertex is congruent to $p+k-p+1=1$ modulo $k$ and the cop is just starting his $(k-p+2)$-th simulation of $W_1,...,W_\Omega$, where the robber is then captured.

Suppose now that a single cop can win on $\tilde{D}$ by playing the strategy $(\lbrace u_i^{j_i}\rbrace)_{i\ge 1}$. On $D$, $k$ cops can win by playing the vertices obtained by projecting in $D$ a consecutive batch of $k$ vertices played by the cop of $\tilde{D}$. More precisely, at step $s$, the cops play $W_s:=\cup_{j=1}^k \lbrace u_{(s-1)k+j}\rbrace$. Let $(r_i)_{i\ge 1}$ be a robber strategy in $D$ and consider a robber strategy in $\tilde{D}$, obtained by iteratively playing $r_i^1,...,r_i^k$ for $i\ge 1$: this latter strategy eventually leads to capture, meaning there exist integers $t\ge 1$ and $1\le j \le k$, such that $r_t^j=u_{(t-1)k+j}^j$, thus $r_t \in W_t$ and the robber is captured in $D$.
\end{proof}
To illustrate the technique used in the above proof,  Figure \ref{fig:subdiv} shows an example of a directed triangle with $3$ loops. To check whether the cop number is less than or equal to $2$, the graph $\tilde{D}$ is built (on the right part of the figure). An obvious winning strategy in $D$ using two cops is given by: $W_1 = \{a,b\}$ and $W_2 = \{a,c\}$. Table \ref{tab:stra} provides the simulated strategy in $\tilde{D}$ (we also give the robber territory at each time step). Observe that there is a "pausing" move at time step $5$ represented by  $\emptyset$.

\begin{figure}[htbp]
\centering
\vspace{-3mm}
\includegraphics
[scale=0.25]{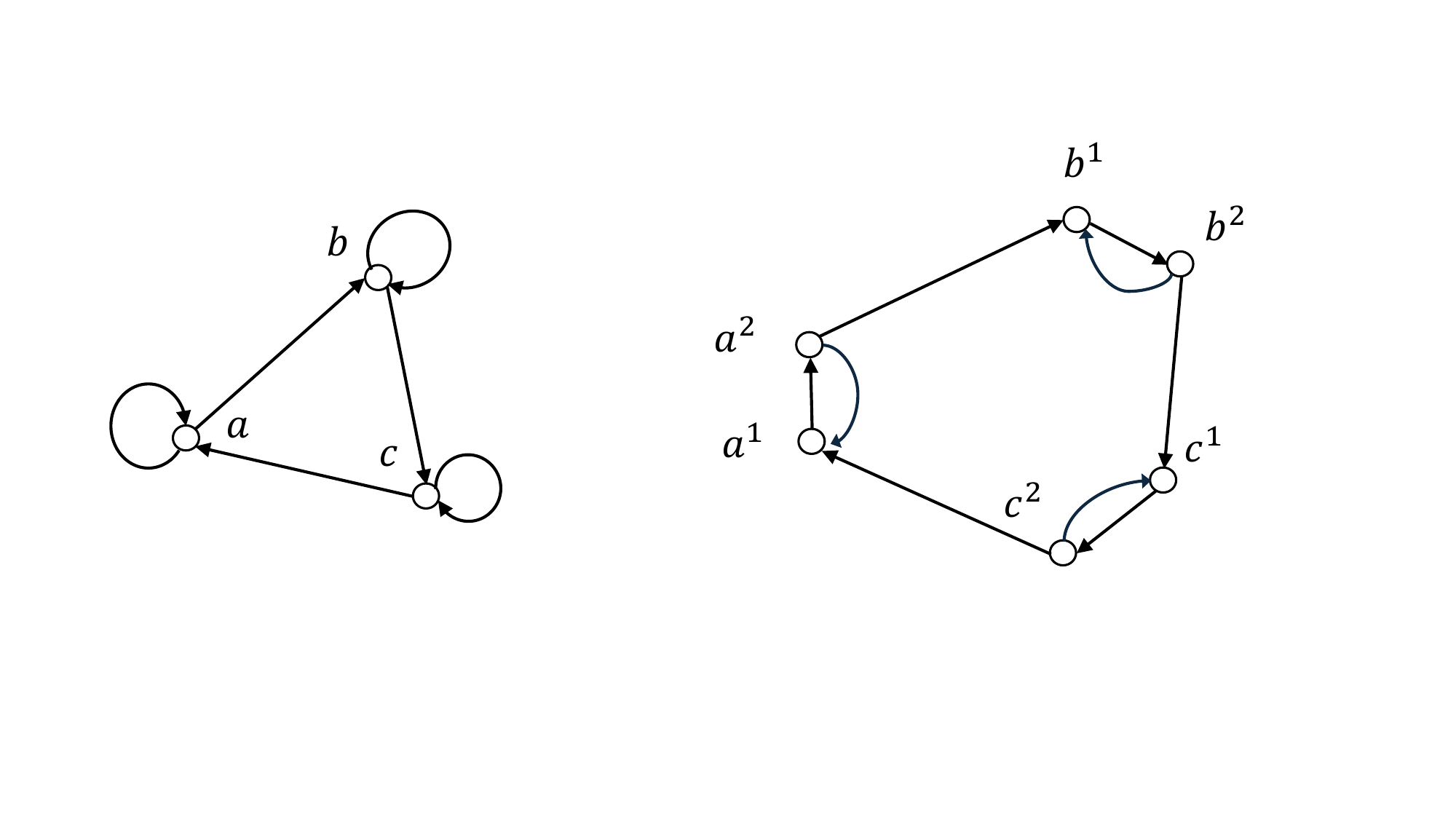} 
\vspace{-15mm}
    \caption{ Illustration of  subdivision: $k = 2$, $D$ on the left and  $\tilde{D}$ on the right}
\label{fig:subdiv}
    \end{figure}
    
\begin{table}[h!]
\label{tab:stra}
\center
\hspace*{-0.3cm}
\begin{tabular}{|c|c|c|}
\hline
$t$  & $W_t$ & $R_t$ \\
\hline
$1$ & $\{ a^1 \}$ & $\{ a^2, b^1, b^2, c^1, c^2 \}$\\
$2$ & $\{ b^2 \}$ & $\{a^1, b^1, c^1, c^2 \}$ \\
$3$ & $\{ a^1 \}$& $\{a^2, b^2, c^1, c^2 \}$ \\
$4$ &  $\{ c^2 \}$& 
$\{a^1, b^1, c^1 \}$
\\
$5$ &  $\emptyset $& $\{a^2, b^2, c^2 \}$ \\
$6$ & $\{ a^1 \}$&   
$\{b^1, c^1 \}$
\\
$7$ & $\{ b^2 \}$& 
$\{c^2 \}$
\\
$8$ &  $\{ a^1 \}$& $\{c^1 \}$ \\
$9$ &  $\{ c^2 \}$& $\emptyset$\\
\hline
\end{tabular}
\caption{The simulated winning strategy on $\tilde{D}$ and the corresponding robber territory at each step.}
\end{table}

Suppose that there exists an $h$-approximation algorithm for the cop number of a digraph $D$, with $1< h < 2$: the cop number computed by this algorithm would be less than two if and only if $\mbox{cn}(D)=1$. It is therefore difficult to approximate the cop number with a factor less than two.
\begin{corollary}
For every $0 < \epsilon \le 1$, it is NP-hard to approximate $\mbox{cn}(D)$ within a factor $2-\epsilon$.
\end{corollary}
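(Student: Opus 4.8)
The plan is to derive this hardness by a (Turing) reduction from the problem of deciding whether $\mbox{cn}(D)\le 1$, which is NP-hard by Theorem \ref{1ishard}. I would argue by contradiction: suppose that for some fixed $\epsilon\in(0,1]$ there is a polynomial-time algorithm $A$ that, on input $D$, returns a value $\mathrm{ALG}(D)$ guaranteed to satisfy $\mbox{cn}(D)\le \mathrm{ALG}(D)\le (2-\epsilon)\,\mbox{cn}(D)$ (for a minimization problem an $h$-approximation returns a winning strategy whose number of cops is within a factor $h$ of the optimum, so $\mathrm{ALG}(D)$ is in fact a nonnegative integer). I would then show that $A$ can be used to decide the $1$-copwin problem in polynomial time, forcing $\mathrm{P}=\mathrm{NP}$.

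The core of the argument is the observation that the multiplicative factor $2-\epsilon$ is strong enough to separate cop number $1$ from cop number at least $2$. Concretely, if $\mbox{cn}(D)=1$ then $\mathrm{ALG}(D)\le (2-\epsilon)\cdot 1 = 2-\epsilon < 2$, whereas if $\mbox{cn}(D)\ge 2$ then already the lower-bound part of the guarantee gives $\mathrm{ALG}(D)\ge \mbox{cn}(D)\ge 2$. Hence the single numerical test ``$\mathrm{ALG}(D)<2$'' returns true exactly when $\mbox{cn}(D)\le 1$, and this test is evaluated in polynomial time once $A$ has run. Composing $A$ with this test therefore solves the NP-hard $1$-copwin decision problem in polynomial time, the desired contradiction.

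The only point that needs a little care -- and the one I would flag as the main (if minor) obstacle -- is the boundary case $\mbox{cn}(D)=0$, since a multiplicative guarantee maps $0$ to $0$ and so cannot, on its own, distinguish $\mbox{cn}(D)=0$ from $\mbox{cn}(D)=1$ by scaling. This causes no trouble here because both values satisfy $\mbox{cn}(D)\le 1$ and thus lie on the same side of the $1$-copwin dichotomy; moreover $\mbox{cn}(D)=0$ holds precisely when $D$ is acyclic -- equivalently, when the empty strategy captures the robber, i.e. when iterating $N^+$ starting from $V$ eventually empties the robber territory -- which is decidable in polynomial time. I would therefore preface the reduction with an acyclicity check that immediately answers ``$1$-copwin'' on acyclic inputs and invokes $A$ only on instances with $\mbox{cn}(D)\ge 1$, where the separation above applies verbatim. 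Since $\epsilon\in(0,1]$ was arbitrary, this establishes NP-hardness of approximating $\mbox{cn}(D)$ within any factor $2-\epsilon$.
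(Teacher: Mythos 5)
Your proposal is correct and follows essentially the same route as the paper: the paper's proof is exactly the observation that an algorithm with approximation factor strictly below $2$ would output a value less than $2$ precisely when $\mbox{cn}(D)=1$, contradicting Theorem \ref{1ishard}. Your extra handling of the boundary case $\mbox{cn}(D)=0$ via a polynomial acyclicity check is a sensible refinement that the paper glosses over, but it does not change the substance of the argument.
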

{Let us briefly remind some basic definitions of parameterized complexity. An instance of a parameterised version of a decision problem is a pair $(X, h)$, where $X$ is an instance of the decision problem and $h\ge 0$ is an integer (parameter) associated with $X$. We say that a problem is fixed-parameter tractable (FPT) if there exists an algorithm that solves a given $(X,h)$ in time that is $f(h)|X|^{O(1)}$, for any (computable) function $f(h)$. {A problem is para-NP-hard if it is NP-hard already for some constant value of the parameter.} A kernelization for a parameterized problem is an algorithm taking $(X,h)$ and producing an instance $(X',h')$ in time polynomial in\footnote{Notice this definition is equivalent to the classical definition of a kernelization algorithm that requires to run in time polynomial also in the parameter $h$} $|X|$; the produced instance $(X',h')$ is required to be a yes instance if and only if $(X,h)$ is so; the size of $X'$ and $k'$ are required to be bounded by a function of only $h$.\\
The output $(X',h')$ of a kernelization algorithm is called a kernel. It is well-known that a decidable parameterised problem is FPT if and only if it admits a kernel (see e.g. \cite{Cygan2015}).\\ \ \\}
{Another immediate corollary of Theorem \ref{1ishard} is the following.
\begin{corollary}
The problem of deciding wheter a given digraph $D$ is $k$-copwin is para-NP-hard parameterized by $k$.
\end{corollary}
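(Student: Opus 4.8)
The plan is to read off the result directly from the definition of para-NP-hardness recalled just above, together with Theorem \ref{1ishard}. Recall that the parameterized problem at hand takes as input a pair $(D,k)$ and asks whether $\mbox{cn}(D)\le k$, with the parameter being the integer $k$ itself. By the definition stated in the excerpt, a parameterized problem is para-NP-hard as soon as it is already NP-hard when the parameter is fixed to some \emph{constant} value. So the entire task reduces to exhibiting one constant value of $k$ for which NP-hardness holds.

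First I would fix $k=1$. The restriction of the $k$-copwin problem to this constant value is precisely the problem of deciding whether a given digraph $D$ satisfies $\mbox{cn}(D)\le 1$, i.e.\ whether $D$ is $1$-copwin. This is exactly the problem shown to be NP-hard in Theorem \ref{1ishard} via the subdivision reduction. Hence the slice of the parameterized problem at the fixed parameter value $1$ is NP-hard, which is exactly what the definition of para-NP-hardness requires.

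There is essentially no obstacle to overcome here: all the work has already been carried out in Theorem \ref{1ishard}, and the corollary is purely a matter of matching that theorem against the definition of the parameter. The only point worth stating explicitly, for the sake of precision, is that the parameter of the problem is indeed $k$ (so that ``some constant value of the parameter'' means a fixed number of cops), and that Theorem \ref{1ishard} certifies NP-hardness for the single constant $k=1$ rather than merely for a growing family of values. With that identification in place, the conclusion follows immediately, and no further reduction or construction is needed.
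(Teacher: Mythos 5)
Your proof is correct and matches the paper's own (implicit) argument exactly: the paper states this as an ``immediate corollary'' of Theorem \ref{1ishard}, precisely because the slice $k=1$ of the parameterized problem is the $1$-copwin decision problem shown NP-hard there. Nothing more is needed.
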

}
On the positive side, we present a kernelization providing an FPT algorithm. For the undirected version of our game, the authors of \cite{dissaux_et_al:LIPIcs.MFCS.2023.42} provide an FPT algorithm parameterized by the minimum size of a vertex cover. We generalize their result.\\
Given a digraph $D=(V,A)$, let $w$ be an integer, and let $v_w$ be the minimum size of a set $X\subseteq V$, whose deletion leaves an acylic digraph having weak components of at most $w$ vertices. {A similar notion (without the requirement of the remaining graph being acyclic) has been studied for undirected graphs under the name of $w$-separator (or $w$-vertex separator)  \cite{ksep,Lee}; the problem of finding such a separator is called the component order connectivity problem and it has been studied on directed graphs from a parameterized complexity perspective in \cite{bangDCOC}. }
Notice that $v_w$ is weakly decreasing in $w$, moreover $v_1$ equals the minimum vertex cover number of $D$, while  $v_{|V|}$ equals the minimum feedback vertex set number of $D$. It follows that, for any positive integer $w$, the minimum feedback vertex set number, and thus also $\mbox{cn}(D)$, are lower bounds for $v_w$, while the vertex cover number is an upper bound. 
{
\begin{lemma}\label{approvw}
Let $D=(V,A)$ be a digraph and let $w,v_w$ be defined as above. There exists a polynomial algorithm that calculates a $(w+1)$-approximation for $v_w$ and computes the set $X$ to be deleted.
\end{lemma}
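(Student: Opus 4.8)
The plan is to design a greedy deletion algorithm based on vertex-disjoint obstructions, in the spirit of the classical primal approximations for vertex cover and feedback vertex set. The two properties we must enforce in $D-X$ are acyclicity and the bound $w$ on weak-component sizes. The key observation is that both requirements are violated only by small certificates: a weakly connected set of exactly $w+1$ vertices violates the component bound, and any directed cycle violates acyclicity; moreover any feasible solution must delete at least one vertex from each such certificate. Since a directed cycle of length at least $w+1$ is itself a weakly connected set of more than $w$ vertices, once all weak components are forced below size $w+1$, every remaining directed cycle automatically has at most $w$ vertices. This lets us split the procedure into two clean phases, each producing certificates on at most $w+1$ vertices.

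First I would run Phase 1: while the current digraph has a weak component of size larger than $w$, grow (by a BFS/DFS in the underlying undirected graph) a weakly connected set $B$ of exactly $w+1$ vertices, add all of $B$ to $X$, and delete $B$. This phase terminates with every weak component of size at most $w$. Then I would run Phase 2: while the current digraph contains a directed cycle $C$ (detected by a standard DFS back-edge search, with loops treated as length-one cycles), add $V(C)$ to $X$ and delete it; because cycles now live inside weak components of size at most $w$, each such $C$ has at most $w$ vertices, and deletions never enlarge components. When Phase 2 ends the surviving digraph is acyclic with all weak components of size at most $w$, so $X$ is feasible. Each iteration removes at least one vertex, so the whole procedure is polynomial, the weak-component and cycle-detection subroutines being standard polynomial graph routines.

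For the approximation guarantee I would collect the family $\mathcal{F}$ of all sets deleted across both phases. These sets are pairwise disjoint by construction, and each is a genuine certificate in $D$: a set $B$ is weakly connected on $w+1$ vertices, so an optimal solution $X^*$ must meet it, otherwise $B$ would lie inside a single weak component of $D-X^*$ of size larger than $w$; and each $V(C)$ is a directed cycle of $D$, so $X^*$ must meet it, otherwise $D-X^*$ would not be acyclic. Picking one vertex of $X^*$ from each member of $\mathcal{F}$ gives an injection into $X^*$ by disjointness, hence $|\mathcal{F}| \le v_w$. Since every member of $\mathcal{F}$ has at most $w+1$ vertices, $|X| = \sum_{F\in\mathcal{F}} |F| \le (w+1)|\mathcal{F}| \le (w+1)\,v_w$.

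The point that needs the most care---rather than a genuine obstacle---is the interaction between the two requirements: one must justify explicitly that Phase 2 only ever encounters cycles of at most $w$ vertices and that deletions in either phase never create new large weak components. Both follow from the monotonicity of weak connectivity under vertex deletion, but they must be stated so that the uniform per-certificate bound of $w+1$ is valid. A secondary point to verify is that the certificates remain certificates in the original digraph $D$: weak connectivity of $B$ and the presence of the cycle $C$ are preserved when passing from the reduced digraph back to $D$, since we only ever add arcs, and this is precisely what licenses comparing our family against the global optimum $X^*$.
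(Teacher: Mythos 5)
Your proposal is correct and follows essentially the same route as the paper's proof: a first phase deleting weakly connected batches of $w+1$ vertices until all weak components have size at most $w$, a second phase deleting short directed cycles, and the approximation bound via the pairwise-disjoint certificates (each of size at most $w+1$) that any feasible solution must intersect. Your explicit justification that the certificates remain obstructions in the original digraph $D$ is a point the paper leaves implicit, but the argument and the $(w+1)v_w$ bound are the same.
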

\begin{proof}
Let $G$ be the underlying undirected graph of $D$. We can begin with $X=\emptyset$ and perform a breadth first search (BFS) of the connected components of $G$ starting at any vertex: if the search terminates before reaching $w+1$ vertices, the examined component has size at most $w$, so  another vertex (from another component) of $G$ is picked and another search is performed; as soon as $w+1$ vertices are reached in the same component, they are deleted from $G$, added to $X$ and the search is continued from an untouched vertex. This procedure ends when all vertices are reached and produces a set $X$, whose deletion from $D$ leaves a digraph having weakly connected components of size at most $w$. Its running time is $O(|V|^2)$.\\
After the deletion of $X$, in order to produce an acyclic digraph, a (directed) BFS is performed on each weak component of the remaining directed graph (where all cycles have length at most $w$): starting from any vertex, its component is explored and, as soon as a directed cycle is found, all the vertices of the cycle are deleted and added to $X$; the search is then restarted on the same (updated) component and continues until the component becomes acyclic. This is repeated on all the weak components. Notice that, since there are $O(|V|)$ components (of size at most $w$) and since at most $w$ BFS are performed on the same component, the running time of this second procedure is $O(|V|w^3)$. 
The total running time of the two procedures is therefore $O(|V|^2 + |V|w^3)$. \\
The set $X$ calculated at the end is such that the digraph obtained after the deletion of $X$ is acyclic and has weak components of size at most $w$. Now consider the disjoint groups of vertices deleted by any of the two procedures: in the first one they are a weakly connected subgraph on $w+1$ vertices, in the second one they form a directed cycle of size at most $w$. It follows that any set whose deletion leaves an acyclic digraph with weak components of size at most $w$, must contain at least a vertex from each of the deleted group vertices. We thus have $|X| \le (w+1)v_w$.
\end{proof}}
\begin{theorem}
Let $D=(V,A)$ be a digraph, let $k$ be an integer and let $w,v_w$ be defined as above. Deciding whether $\mbox{cn}(D)\le k$ is FPT parameterized by $w+v_w$.
\end{theorem}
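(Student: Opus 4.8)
The plan is to design a kernelization: shrink $D$ to an equivalent instance $D^{\star}$ whose size is bounded by a function of $w+v_w$, and then decide $\mathrm{cn}(D^{\star})\le k$ by the brute-force procedure of Corollary \ref{exposolver}, which runs in time $O(4^{|V(D^{\star})|})$. Since $|V(D^{\star})|$ will be bounded by a function of the parameter, this gives an FPT algorithm. First I would run the polynomial algorithm of Lemma \ref{approvw} to obtain a set $X\subseteq V$ with $|X|\le (w+1)v_w$ such that $D-X$ is acyclic with weak components of at most $w$ vertices. As $X$ is a feedback vertex set we have $\mathrm{cn}(D)\le |X|$, so if $k\ge |X|$ we answer YES immediately; we may therefore assume $k<|X|\le (w+1)v_w$, so that $k$ is itself bounded by the parameter. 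I would then sort the weak components of $D-X$ into \emph{types}: two components have the same type if there is a digraph isomorphism between them that also preserves, for every vertex, its in- and out-adjacencies to each vertex of $X$. The number of types is at most the number of labelled digraphs on at most $w$ vertices times $2^{2w|X|}$ (for the interfaces to $X$), i.e. $2^{O(w^2+w|X|)}=2^{O(w^2 v_w)}$, a function of the parameter alone.

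The reduction rule is then: for every type, as long as it occurs in more than $N$ components, delete one of them, where $N=N(w,k)$ is a threshold specified below. After exhaustive application the kernel $D^{\star}$ has at most $|X|+N\cdot w\cdot(\#\text{types})$ vertices, bounded by a function of $w+v_w$, and Corollary \ref{exposolver} applied to $D^{\star}$ finishes the algorithm. Everything thus reduces to showing that one application of the rule preserves the answer, i.e. that $\mathrm{cn}(D)\le k$ iff $\mathrm{cn}(D')\le k$, where $D'$ is $D$ with one component of an over-represented type removed (the classification into types is unaffected by such deletions, so the rule can be applied in a single pass).

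One direction is easy. Since $D'$ is an induced subdigraph of $D$, I would take any winning cop strategy on $D$ and discard its moves inside the deleted component: one checks by induction on $i$ that the robber territories satisfy $R_i^{D'}\subseteq R_i^{D}\cap V(D')$, using only that $N^+_{D'}(S)=N^+_D(S)\cap V(D')$ for $S\subseteq V(D')$ (valid because $D'$ is induced) and the monotonicity of $N^+$. Hence $\mathrm{cn}(D')\le \mathrm{cn}(D)$, so removing a component never increases the cop number, and in particular $\mathrm{cn}(D)\le k$ implies $\mathrm{cn}(D')\le k$.

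The hard direction — and the main obstacle — is to show that when an over-represented type already has more than $k$ copies, adding one further copy cannot increase the cop number beyond $k$. Here I would exploit abundance: since the cops place at most $k$ tokens per step, among $k+1$ interchangeable copies of any configuration at least one is left untouched and evolves by the \emph{free} transition $S\mapsto N^+_{C}(S)\cup F$, where $F$ is the (type-determined) inflow from the current territory of $X$. The idea is to run the winning strategy of $D'$ unchanged on $X$ and on the retained copies, while \emph{slaving} the extra copy to a freely-evolving retained copy of the same current configuration; the extra copy then always shares its configuration with a retained one, so its feedback into $X$ is dominated, the evolution of the territory on $X$ — and hence the whole simulation — is unchanged, and the extra copy is emptied exactly when its template is. Because $D-X$ is acyclic, a configuration that stops being fed drains in at most $w$ steps, which is what ultimately clears all copies at once. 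Making the slaving consistent — guaranteeing that a retained free copy of the right configuration is available at every step and that no extra cop is ever spent on the slaved copy — is the delicate point; it is what forces the threshold $N$ to be taken as a suitable function of $w$ and $k$ (large enough that every configuration reached along the strategy stays abundant), and controlling the feedback into $X$ throughout the simulation is the crux of the proof.
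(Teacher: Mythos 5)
Your framework is exactly the paper's: compute $X$ with Lemma \ref{approvw}, answer YES when $k\ge |X|$, classify the components of $D-X$ by isomorphism preserving in/out-adjacency to $X$, truncate each abundant class to a threshold, and finish with Corollary \ref{exposolver}; your easy direction ($D'$ induced, so $\mbox{cn}(D')\le \mbox{cn}(D)$) is also the paper's. But the hard direction, which you yourself flag as unresolved, is a genuine gap, and the ``slaving'' simulation you sketch is the wrong vehicle for closing it. The difficulty is structural: no single retained copy of a type is guaranteed to stay cop-free, since the winning strategy on $D'$ may rotate its $k$ cops through the retained copies over time; so the extra copy's freely-evolving territory cannot be matched into one fixed template, the templates must be switched dynamically, and each switch must be certified compatible with the inflow history from $X$ --- all while the feedback from the extra copy into $X$ must be shown dominated so that the territory on $X$ evolves as in $D'$. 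You never specify $N$, and your per-step pigeonhole (``among $k+1$ copies one is untouched'') is insufficient, because what is needed is a copy untouched over an entire window of up to $w$ steps, not at a single step.

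The missing idea is to argue on the robber's side, in the contrapositive, which dissolves all the territory bookkeeping. Suppose $\mbox{cn}(D')\le k<\mbox{cn}(D)$ and run a winning $k$-cop strategy for $D'$ on $D$; some robber walk $(r_i)$ escapes it. Any maximal excursion of this walk into a deleted component $C$ is a directed path $r_t,\dots,r_{t+h}$ with $h\le w-1$ (the component is acyclic on at most $w$ vertices, so the walk inside it cannot revisit a vertex) entered from $X$ and exiting to $X$. During these at most $w$ steps the cops occupy at most $kw$ vertices, hence touch at most $kw$ of the retained copies equivalent to $C$; with threshold $N=kw+1$ (the paper's choice) some retained copy $C'$ is cop-free throughout the excursion, and the type-preserving isomorphism guarantees the images $r'_t,\dots,r'_{t+h}$ have the same in/out-arcs to $X$, so the rerouted walk is legal and still escapes. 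Replacing every excursion this way yields an escaping robber strategy inside $D'$, contradicting that the strategy wins there. This fixes $N$ concretely, makes the pigeonhole window-based rather than per-step, and is where your ``drains in at most $w$ steps'' observation actually belongs: it bounds the excursion length, not a decay time of configurations. With that substitution your kernelization is correct and coincides with the paper's.
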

\begin{proof}
{Let $X$ be a set as calculated in Lemma \ref{approvw}: } consider the digraph obtained by the deletion of $X$
and let $C_1,...,C_l$ be its weak components. We define an equivalence relation between components as follows: two components are equivalent if they are isomorphic and the isomorphism matches vertices having the same in- and out-neighbors in $X$. Notice that the number of possible different classes of equivalence is upper bounded by
$$f(w,v_w):=\sum_{i=1}^w (2^{ (w+1)v_w}\cdot 2^{(w+1)v_w})^i \cdot 2^{i^2}\le 2^{w^2} \cdot\sum_{i=1}^w (2^{ 2(w+1)v_w})^{i} < 2^{w^2+1+ 2(w+1)v_w w}.$$  
Consider a digraph $D'$ obtained from $D$ by keeping only $kw+1$ components from the classes that contain more than $kw+1$ components. 
\begin{claim}
    We have that {$\mbox{cn}(D)\le k$ if and only if $\mbox{cn}(D')\le k$}
\end{claim}
\begin{proof}
{ The first implication follows from the fact that $D'$ is a subdigraph of $D$, hence $\mbox{cn}(D')\le \mbox{cn}(D)$. For the converse, suppose by contradiction that $\mbox{cn}(D') \le k < \mbox{cn}(D)$} and consider a cop winning strategy $(W_i)_{i\ge1}$ in $D'$ using $\mbox{cn}(D')$ cops: the cop strategy is also a strategy on $D$, where there must exist a robber strategy $(r_i)_{i\ge1}$ that can escape it. Now, if some $r_t$ is not a vertex of $D'$, then there exists $0\le h \le w-1$ such that $r_t,...,r_{t+h}$ are the vertices of a path in a component $C$ that has been discarded from $D$ and $r_{t+h+1}\in X$ ($r_{t-1} \in X$ too, if $t\ge 1$). The digraph $D'$ contains $wk+1$ components equivalent to $C$, thus there must be at least one that is cop-free at steps $t,...,t+h$. Let $C'$ be such a component and let $r'_t,...,r'_{t+h}$ be the vertices of $C'$ matched to $r_t,...,r_{t+h}$ respectively, by the isomorphism between $C$ and $C'$. Note that, by definition of our equivalence, $r_{t+h+1}$ is an out-neighbor of $r'_{t+h}$ and, if $t\ge1$, $r_{t-1}$ is an in-neighbor of $r'_{t}$. It follows that by replacing $r_t,...,r_{t+h}$ with $r'_t,...,r'_{t+h}$ we still have an escaping robber strategy. All robber vertices outside $D'$ can be replaced in this way by vertices of $D'$ obtaining a robber strategy in $D'$ that can escape the cop strategy $(W_i)_{i\ge1}$, contradicting the assumption that there was no such robber strategy.
\end{proof}
Our kernelization algorithm, for a fixed $w$, is as follows: given an input $(D,k)$, first calculate a $(w+1)$-approximation for $v_w$ and the corresponding set of vertices $X$ to be deleted; if $|X|\le k$, then output a YES-instance. Otherwise construct the digraph $D'$ and output $(D',k)$. Observe that the construction of $D'$ can be done by {checking, for each possible pair of components (of size at most $w$), whether there exists an isomorphism between them that matches vertices having the same in- and out-neighbors in $X$. Now, there are $O(|V|^2)$ possible pairs of components and, for each pair, $O(w!)$ possible isomorphisms; moreover checking neighbors in $X$ can be done in time $O(w(w+1)v_w)$ {given that the size of a component is at most $w$ and the size of $X$ is at most $(w+1)v_w$}. It follows that $D'$ can be constructed in time $O\left(w!w^2v_w \cdot |V|^2 \right)$.
Recall that $X$ can initially be calculated (as in Lemma \ref{approvw}) in time $O(|V|^2 + |V|w^3)$, so the kernelization algorithm total running time is $O\left(w!w^2v_w \cdot |V|^2 \right)$.\\
The number of vertices of $D'$ is at most $$(w+1)v_w + (wk+1)f(w,v_w)\le  (w+1)v_w + (w(w+1)v_w+1)f(w,v_w)=:g(w,v_w),$$
where the inequality is obtained using that $k\le |X| \le (w+1) v_w$. Observe that we can use Corollary \ref{exposolver} to solve the problem on $D'$ in time $2^{O\left( g(w,v_w)\right)}$ and $g(w,v_w)$ is $O\left( w^2v_wf(w,v_w)\right)$, namely $2^{O\left( w^2v_w\right)})$. It follows that the total running time of our FPT algorithm is 
$O\left(w!w^2v_w \cdot |V|^2 + 2^{2^{O\left( w^2v_w\right)}}\right)$.} 

\end{proof}

Since it is NP-hard to compute the cop number, we provide here an easy-to-check necessary condition by computing the maximum number of arcs of a $k$-copwin digraph.
\begin{theorem}
If $\mbox{cn}(D) \leq k$, then $| A(D)| \le  nk + (n-k)(n+k-1)/2$,
and the bound is reached for some digraphs.  
\end{theorem}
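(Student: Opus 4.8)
The plan is to prove the upper bound by a vertex-elimination (peeling) argument driven by the necessary condition stated right after Lemma~\ref{lem:obvious}, and to certify tightness with an explicit ``staircase'' digraph.

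For the upper bound I would first record that the cop number is monotone under taking induced subdigraphs: restricting a winning $k$-cop strategy to $D-v$ only shrinks the robber's territory at every step, so $\mbox{cn}(D-v)\le \mbox{cn}(D)$. The case $p=1$ of the necessary condition following Lemma~\ref{lem:obvious} says that any digraph with cop number at most $k$ contains a vertex of out-degree at most $k$ (loops counted). Using these two facts I would construct an elimination order $v_1,\dots,v_n$: repeatedly delete from the current digraph a vertex of out-degree at most $k$. This is always possible because the digraph $D_i:=D[\{v_i,\dots,v_n\}]$ remaining when $v_i$ is to be removed is an induced subdigraph of $D$, hence still satisfies $\mbox{cn}(D_i)\le k$. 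We may assume $k\le n$, since $\mbox{cn}(D)\le n$ always.

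Next I would charge the arcs of $D$ to this order. When $v_i$ is deleted, $n-i+1$ vertices remain: $v_i$ emits at most $k$ arcs inside $D_i$ and receives at most $n-i$ arcs from the other vertices of $D_i$, so at most $k+(n-i)$ arcs of $D$ are incident to $v_i$ and not yet counted. Summing these bounds over $i=1,\dots,n-k$ and adding the at most $k^2$ arcs spanned by the last $k$ vertices telescopes to exactly $nk+\frac{(n-k)(n+k-1)}{2}$. This direction has no real difficulty; the only thing to watch is consistent bookkeeping of loops and of the boundary $k\le n$.

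For tightness I would take the digraph on $v_1,\dots,v_n$ defined by $N^+(v_p)=\{v_1,\dots,v_{\min(p+k-1,\,n)}\}$, whose out-degrees are $\min(p+k-1,n)$ and sum to the bound. The step that requires genuine argument is that this digraph is $k$-copwin: I would prove the invariant that whenever the robber territory is contained in $\{v_1,\dots,v_m\}$, playing the $k$ cops on $\{v_m,\dots,v_{\min(m+k-1,n)}\}$ (at most $k$ vertices) forces the territory into $\{v_1,\dots,v_{m-1}\}$; hence the largest index present strictly decreases each step and the robber is eventually caught, giving $\mbox{cn}\le k$, while the minimum-out-degree lower bound $\delta^+=k$ yields $\mbox{cn}=k$. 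Checking that this strategy is winning and never exceeds $k$ cops is the main point to verify carefully.
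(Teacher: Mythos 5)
Your proof is correct, and it takes a partially different route from the paper's. For the upper bound, the paper relies on the same two ingredients you do --- the $p=1$ case of Lemma~\ref{lem:obvious} guaranteeing a vertex $v_0$ with $d^+(v_0)\le k$, and heredity of $k$-copwin under vertex deletion --- but packages them as an induction on $n$ run by contradiction: assuming $|A(D)|\ge 1+nk+(n-k)(n+k-1)/2$, it deletes $v_0$ and shows via a degree-sum computation (split into the cases $d^-(v_0)=n$ and $d^-(v_0)\le n-1$, the first to account for a loop at $v_0$) that $D-v_0$ would violate the bound for $n-1$. Your peeling argument is essentially this induction unrolled, and your single bookkeeping step --- $v_i$ carries at most $k$ out-arcs (loop included in that budget) plus at most $n-i$ in-arcs within $D_i$, with $k^2$ arcs left among the final $k$ vertices --- replaces the paper's two-case analysis and is arguably cleaner; your explicit justification that $\mbox{cn}(D-v)\le \mbox{cn}(D)$ (the restricted territory stays inside the original one) is also sound and is something the paper only asserts. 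Where you genuinely diverge is tightness: the paper takes $V=S\cup T$ with $|S|=k$, $D[S]$ complete with loops ($k^2$ arcs), $D[T]$ an acyclic tournament, and all arcs in both directions between $S$ and $T$, so that $k$ \emph{stationary} cops on $S$ win trivially because the robber is confined to the acyclic part; your staircase digraph $N^+(v_p)=\{v_1,\dots,v_{\min(p+k-1,n)}\}$ attains the same arc count (the out-degree sum indeed telescopes to $nk+(n-k)(n+k-1)/2$) but requires the dynamic sweep you describe, which checks out: if the territory lies in $\{v_1,\dots,v_m\}$ its out-neighborhood lies in $\{v_1,\dots,v_{\min(m+k-1,n)}\}$, so playing the at most $k$ cops on $\{v_m,\dots,v_{\min(m+k-1,n)}\}$ strictly decreases the largest index present, and $\delta^+=k$ gives the matching lower bound, so $\mbox{cn}=k$. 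The paper's example buys a one-line winning certificate; yours buys a single homogeneous family with a clean invariant. One shared caveat: your ``we may assume $k\le n$'' papers over the same implicit restriction the paper makes (it asserts $n\ge k$ for $k$-copwin digraphs, which is not literally true); for $k\ge n+2$ the stated inequality actually fails on the complete digraph with loops, so both proofs tacitly read the theorem in the range $k\le n$, and your handling is no worse than the paper's.
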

\begin{proof}
We prove the result by induction on $n:=|V(D)|$. If $D$ is  k-copwin then $n \ge k$. To initialize the induction, observe that when $n = k$, the upper bound is just $n^2$ which is obviously valid. 
Let $D = (V,A)$ be a k-copwin digraph and assume that $|A(D)| \geq 1 + nk + (n-k)(n+k-1)/2$. Then $\sum_{v \in V} d^+(v) + d^-(v) \ge 2 + 2nk + (n-k)(n+k-1)$, where $d^+(v)$ (resp. $d^-(v)$) denotes the outdegree (resp. indegree) of $v$. 
 We know from Lemma  \ref{lem:obvious} that  $\mbox{cn}(D) \le k$ implies that there exists at least one vertex $v_0$ such that $d^+(v_0) \le k$.
 We will consider the digraph $D' = D \setminus \{v_0\}$. Observe that $D'$ is also $k$-copwin and consider the sum $\sum_{v \in V \setminus \{v_0 \}} d^+_{D'}(v) + d^-_{D'}(v)$ (we added here the subscript $D'$ to emphasize the fact that degrees are here considered in $D'$ and not in $D$). 
Assume first that $d^-(v_0) = n = |V|$. This implies that $d^+(v_0) + d^-(v_0) \leq n + k$ and $D$ contains a loop around $v_0$. By deleting the vertex $v_0$, the sum of outdegrees will decrease by at most $(n+k)$ (upper bound of $d^+(v_0) + d^-(v_0)$) plus $(n-1+k-1)$ (the outdegree of each vertex decreases by $1$ while the indegree of at most $k-1$ vertices will decrease by $1$). In other words, 
$\sum_{v \in V \setminus \{v_0 \}} d^+_{D'}(v) + d^-_{D'}(v) \geq (\sum_{v \in V} d^+(v) + d^-(v))  - (n+k) - (n-1+k-1) \ge 2 + 2nk + (n-k)(n + k -1) - 2(n+k) + 2 = 2 + 2(n-1)k + (n-1 - k)(n-1+k-1)$. This leads to a contradiction with the induction hypothesis.
Assume now that $d^-(v_0) \le n-1$. Then the sum of outdegrees will be reduced by at most $(n-1 + k)$ (since $v_0$ is deleted and $d^+(v_0) + d^-(v_0) \le  n-1 + k$)  and another $(n-1+k)$ (at most $k$ vertices of $V\setminus \{v_0\}$ have their indegree reduced by $1$ and at most $n-1$ vertices    have their outdegree reduced by $1$). We get again $\sum_{v \in V \setminus \{v_0 \}} d^+_{D'}(v) + d^-_{D'}(v) \ge 2 + 2nk + (n-k)(n + k -1) - 2(n+k) + 2 = 2 + 2(n-1)k + (n-1 - k)(n-1+k-1)$ leading to the same contradiction. \\
To prove that the upper bound is tight, let us consider the digraph $D=(V,A)$ defined as follows:  $V = S \cup T$ with $|S|= k$, $|T|=n-k$, the digraph $D[S]$ induced by $S$ is complete ($k^2$ arcs), $D[T]$ is an acyclic tournament (it contains $(n-k)(n-k-1)/2$ arcs and no cycles) and $D$ contains all possible arcs from $S$ to $T$ and from $T$ to $S$. By putting $k$ cops on $S$ vertices, the robber is captured so $D$ is $k$-copwin. Observe that the number of arcs is $k^2 + (n-k)(n-k-1)/2 + 2k(n-k) = nk + (n-k)(n+k-1)/2$. 

\end{proof}

{
A class of digraphs for which computing the cop number is easy, is that of round digraphs. A digraph on $n$ vertices is {round} if we can label its vertices $v_1,...,v_n$ so that for each index $i$, we have $N^+(v_i)= \lbrace v_{i+1},...,v_{i+d^+(i)} \rbrace$ and $N^-(v_i)= \lbrace v_{i-d^-(i)},...,v_{i-1} \rbrace$ (indices taken modulo $n$). We call the labeling $v_1,...,v_n$ a {round ordering}. Round digraphs can be recognized (and a round ordering 
provided) in polynomial time \cite{huangAJC19}.
A simple polynomial characterization, in terms of minimum outdegree, exists for round digraphs cop number. Let $\mbox{fvs}(D)$ be the size of a minimum feedback vertex set of $D$.
\begin{theorem}
For a round digraph $D$, $\delta^+(D)=\mbox{cn}(D) = \mbox{fvs}(D)$
\end{theorem}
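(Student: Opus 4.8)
The plan is to establish the sandwich of inequalities
$\delta^+(D) \le \mbox{cn}(D) \le \mbox{fvs}(D) \le \delta^+(D)$, which forces all three quantities to coincide. The two leftmost inequalities are already available: applying Lemma \ref{lem:obvious} with $p=1$ and $k=\delta^+(D)-1$ (every singleton $S$ satisfies $|N^+(S)|=d^+(v)\ge \delta^+(D)$) yields $\mbox{cn}(D)\ge \delta^+(D)$, while $\mbox{cn}(D)\le \mbox{fvs}(D)$ is the feedback-vertex-set upper bound recalled in the Introduction. Hence the whole statement reduces to the single inequality $\mbox{fvs}(D)\le \delta^+(D)$, i.e. to exhibiting a feedback vertex set of size $\delta:=\delta^+(D)$.

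For this I would exploit the interval structure of round digraphs. Fix a round ordering and, after a cyclic rotation, assume the minimum-outdegree vertex is $v_n$, so that $N^+(v_n)=\{v_1,\dots,v_\delta\}$. I claim $F:=N^+(v_n)$ is a feedback vertex set, equivalently that the induced subdigraph on the remaining contiguous block $B:=\{v_{\delta+1},\dots,v_n\}$ is acyclic. Since $B$ is linearly ordered by index, any directed cycle inside $B$ must contain at least one backward arc $v_a\to v_b$ with $\delta<b<a\le n$. The key step is to rule this out using the defining property that $N^-(v_b)=\{v_{b-d^-(b)},\dots,v_{b-1}\}$ is a cyclically consecutive block ending at $v_{b-1}$: since $v_a\in N^-(v_b)$ and, because $a>b$, the vertex $v_a$ precedes $v_b$ only through the wrap-around, this block must contain the entire cyclic stretch $v_a,v_{a+1},\dots,v_n,v_1,\dots,v_{b-1}$, and in particular $v_n$. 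Thus $v_n\in N^-(v_b)$, i.e. $v_n\to v_b$; but $N^+(v_n)=\{v_1,\dots,v_\delta\}$ and $b>\delta$, a contradiction. Therefore $B$ contains no backward arc, hence no directed cycle, and $F$ is a feedback vertex set of size $\delta$, giving $\mbox{fvs}(D)\le \delta^+(D)$.

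The main obstacle is exactly this acyclicity argument, namely correctly turning the cyclic-interval description of $N^-$ into a forced wrap-around in-neighbour at $v_n$; the bookkeeping of indices modulo $n$ (together with the borderline case $a=n$, where $v_n\to v_b$ contradicts $N^+(v_n)=\{v_1,\dots,v_\delta\}$ immediately) must be handled carefully. Everything else is a routine combination of the already-established lower and upper bounds. Combining $\mbox{fvs}(D)\le \delta^+(D)$ with the sandwich above yields $\delta^+(D)=\mbox{cn}(D)=\mbox{fvs}(D)$, as required.
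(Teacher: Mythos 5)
Your proof is correct and takes essentially the same route as the paper's: both reduce everything to showing $\mbox{fvs}(D)\le\delta^+(D)$ and verify that $F=N^+(v_n)=\{v_1,\dots,v_{\delta^+(D)}\}$ (for a minimum-outdegree vertex $v_n$ in a round ordering) is a feedback vertex set, ruling out a backward arc $v_a\to v_b$ with $\delta^+(D)<b<a\le n$ because the consecutive in-neighborhood $N^-(v_b)$ would then wrap around and contain $v_n$, forcing $v_b\in N^+(v_n)$, a contradiction. The only cosmetic difference is that the paper dispatches the degenerate case $\delta^+(D)=n$ separately up front, which your argument covers implicitly since the remaining block $B$ is then empty.
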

\begin{proof}
We have already seen that, for any digraph $D$,
$\delta^+(D)\le \mbox{cn}(D)\le \mbox{fvs}(D)$. We will see that $\delta^+(D)=\mbox{fvs}(D)$, if $D$ is round. \\
First let us assume $\delta^+(D)<n$, otherwise the result follows immediately.
Let $v_1,...,v_n$ be a {round ordering} and assume, without loss of generality, that $d^+(v_n)=\delta^+(D)$, namely $N^+(v_n)= \lbrace v_{1},...,v_{\delta^+(D)} \rbrace$ . Observe that there can be no arc between $v_i$ and $v_j$ if $\delta^+(D)<j<i\le n$, since, in this case, $N^-(v_j) \supseteq \lbrace v_i,...,v_n,v_1,...,v_{j-1}\rbrace $, thus $v_j\in N^+(v_n)$, which is not possible for $\delta^+(D)<j <n$. It follows that the vertices $v_{1+\delta^+(D)}, ..., v_n$ form an acyclic ordering of $V(D) \setminus N^+(v_n)$ and so $N^+(v_n)$ is a feedback vertex set of $D$, of size $\delta^+(D)$.  
\end {proof}
}











\section{Tournaments}
\label{sec:tour}


Let us now focus on tournaments. Remember that a tournament is a digraph with exactly one arc between each two vertices, in one of the two possible directions.  Since the cop number of a digraph is the maximum of the cop numbers of its connected components, we will focus on connected tournaments. 
    
\begin{theorem}
A connected tournament $T$ has a feedback vertex set of size one if and only if $\mbox{cn}(T)=1$.
\label{th:cn=1}
\end{theorem}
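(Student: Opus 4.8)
The plan is to prove the two implications separately, treating the passage from a feedback vertex set of size one to cop number one as routine and the converse as the substantial part. Throughout I take $T$ to be the strongly connected tournament of the statement (the cop number being the maximum over the strong components, this is the case that matters).

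For the easy direction, assume $\{v\}$ is a feedback vertex set, so that $T-v$ is acyclic, i.e. a transitive tournament with a unique linear ordering $w_1,\dots,w_{n-1}$. Keeping a single cop permanently on $v$ confines the robber to $T-v$: the territory is then contained in the acyclic digraph $T-v$, so it can only advance along the ordering and therefore strictly shrinks until it is empty. This gives $\mbox{cn}(T)\le 1$. Conversely $T$, being strongly connected on at least three vertices, contains a directed cycle, and with no cop the territory equals $N^+$ iterated from $V$, which stays equal to $V$; hence $\mbox{cn}(T)\ge 1$, and $\mbox{cn}(T)=1$.

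For the converse I would argue the contrapositive: $\mbox{fvs}(T)\ge 2 \Rightarrow \mbox{cn}(T)\ge 2$, by exhibiting an evasion guaranteeing $R_t\neq\emptyset$ for all $t$ against a single cop. The natural invariant is ``$R_t$ contains the vertex set of a directed cycle''. The base case is immediate, since $R_1=T-c_1$ is non-transitive because $\mbox{fvs}(T)\ge 2\ge 1$. In the inductive step, if the cop's vertex $c_{t+1}$ misses the current cycle $C$ then $V(C)\subseteq R_{t+1}$; the delicate case is $c_{t+1}\in V(C)$, where $R_{t+1}\supseteq N^+(V(C))\setminus\{c_{t+1}\}=V\setminus(\mathrm{Dom}(C)\cup\{c_{t+1}\})$ with $\mathrm{Dom}(C)$ the set of vertices beating all of $C$. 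Here the key observation is that if the territory became acyclic at this step then $\mathrm{Dom}(C)\cup\{c_{t+1}\}$ would be a feedback vertex set, so $\mbox{fvs}(T)\le|\mathrm{Dom}(C)|+1$; controlling the dominators of the maintained cycle is therefore exactly controlling the feedback vertex set. The clean subcase is thus to keep $C$ non-dominated ($\mathrm{Dom}(C)=\emptyset$, equivalently $N^+(V(C))=V$): then $R_{t+1}\supseteq T-c_{t+1}$, which is again non-transitive by $\mbox{fvs}(T)\ge 2$ and so still contains a cycle. I would invoke Moon's theorem (a strong tournament is vertex-pancyclic) to guarantee an ample supply of short, and in particular non-dominated, cycles.

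The hard part is exactly the degenerate case in which, after the cop lands on the robber's cycle, the territory still contains a cycle but no longer a non-dominated one: there are strongly connected tournaments with $\mbox{fvs}=2$ already on five vertices for which, for a suitable vertex $c$, \emph{every} cycle of $T-c$ is dominated, so a purely local ``keep a non-dominated cycle'' invariant cannot be sustained. Consequently the argument must become global. I expect the crux to be a reachability analysis of the ``clearable'' territories (those from which one cop can finish the capture): one shows these are confined to configurations governed by a single feedback vertex, so that when $\mbox{fvs}(T)\ge 2$ no such configuration exists and the robber's forced territory, starting from $V\setminus\{c_1\}$, can never be driven into that family. This is precisely where both strong connectivity and the hypothesis $\mbox{fvs}(T)\ge 2$ are used in full, and it is the step I anticipate to require the most care.
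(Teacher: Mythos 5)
Your first direction is correct and essentially identical to the paper's (a feedback vertex of size one is an upper bound for the cop number; strong connectivity gives $\mbox{cn}(T)\ge 1$). The converse, however, which is the entire substance of Theorem \ref{th:cn=1}, is not proved: your text is a plan whose decisive step is explicitly left open. You set up the contrapositive via the invariant ``$R_t$ contains the vertex set of a directed cycle,'' correctly reduce the inductive step to controlling $\mathrm{Dom}(C)$ when the cop lands on the maintained cycle, and then observe yourself that this local invariant cannot be sustained (after the cop's hit, the surviving territory may contain only dominated cycles), so that ``the argument must become global.'' The proposed global substitute --- a reachability analysis of clearable territories showing they are governed by a single feedback vertex --- is announced as an expectation (``I expect the crux to be\dots'', ``the step I anticipate to require the most care'') with no construction, no invariant, and no proof. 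Since all of the difficulty of the theorem lives precisely there, this is a genuine gap, not a stylistic omission. A secondary soft spot: Moon's vertex-pancyclicity supplies cycles of every length through every vertex, but it does not supply \emph{non-dominated} cycles inside the current territory (the Hamiltonian cycle is trivially non-dominated but need not avoid the cop's vertex), so even your ``clean subcase'' is not actually secured by the tool you invoke.

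For comparison, the paper avoids the global evasion analysis entirely by arguing the direction $\mbox{cn}(T)=1 \Rightarrow \mbox{fvs}(T)=1$ directly and locally. From $\mbox{cn}(T)=1$, Lemma \ref{lem:obvious} with $p=n-1$ produces a vertex $w$ such that $S=V\setminus\{w\}$ has $|N^+(S)|\le n-1$, hence a vertex $v$ with $S\subseteq N^+(v)$ and the arc $(w,v)$. If $T[S]$ is acyclic, $\{w\}$ is a feedback vertex set; otherwise the paper classifies the triangles of $T[S]$ by their distance to $w$, ruling out distance $2$, distance $\ge 3$, and the final two-in-neighbor configuration, each time by exhibiting a $5$- or $6$-vertex induced subtournament in which every $2$-subset (respectively $3$-subset) has out-neighborhood larger by at least $2$, so that $\mbox{cn}\ge 2$ follows again from Lemma \ref{lem:obvious}; the surviving configuration forces every triangle, hence every cycle, through a single vertex, which is then the desired feedback vertex set. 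In short, the evasion content you were trying to establish by a global reachability argument is packaged by the paper into finitely many expansion checks on constant-size subtournaments; importing that device (Lemma \ref{lem:obvious} applied to small induced subtournaments around $w$ and $v$) is what would close your gap.
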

\begin{proof}
Since $T$ is connected, it contains cycles implying that $\mbox{cn}(T) \geq 1$. 
The existence of a feedback vertex set of size $1$ obviously leads to $\mbox{cn}(T) = 1$. Let us then prove the other direction and assume that $\mbox{cn}(T) = 1$. Lemma \ref{lem:obvious} implies the existence of a subset of vertices $S$ of cardinality $n-1$ such that $|N^+(S)| \le n - 1$.
Let $w$ be the unique vertex of  $V \setminus S$. By connectivity of $T$, $N^+(w) \neq \emptyset$ and $N^-(w) \neq \emptyset$. Since $|N^+(S)|\le n-1$ and $w \in N^+(S)$, there exists $v \in S$ such that $v \notin  N^+(S)$. In other words, $S \subset N^+(v)$. Again by connectivity, $T$ should contain the arc $(w,v)$. 
If $T[S]$ is acyclic then all cycles of $T$ go through $w$ implying that 
$\{w\}$ is a feedback vertex set establishing the wanted result.  Let us then assume that $T[S]$ contains cycles. \\ Since $T[S]$ is a tournament containing cycles, $T[S]$ contains triangles.  Let $T[\{v_1, v_2, v_3\}]$ be any  triangle of $T[S]$ (so $T$ contains the arcs $(v_1, v_2)$, $(v_2, v_3)$ and $(v_3, v_1)$)  
and consider the distance to $w$ denoted by $d(T[\{v_1, v_2, v_3\}],w)$ and defined by $d( T[\{v_1, v_2, v_3\}],w) = \min(d(v_1,w), d(v_2,w), d(v_3,w))$, where $d(v_i,w)$ is  the length of a shortest path from $v_i$ to $w$.

Let us  first show that $d(T[\{v_1, v_2, v_3\}],w)$ cannot be equal to $2$. 
\begin{claim}
$d(T[\{v_1, v_2, v_3\}],w) \neq 2$.
\end{claim}
\begin{proof}
Assume for contradiction that $d(T[\{v_1, v_2, v_3\}],w)  =2$. Without loss of generality, we can assume that $v_3$ is the closest vertex to $w$: $d(T[\{v_1, v_2, v_3\}],w) = d(v_3, w)$. Let $v_4$ be a vertex such that $v_3 v_4 w$ is a shortest path. We will focus on the subgraph $T'= T[\{v_1, v_2, v_3, v_4, w, v\}]$.
Remember that $T$ contains the arc $(w,v)$ and all arcs from $v$ to the rest of vertices {and observe that $v\not \in \{v_1,v_2,v_3,v_4\}$}. Since $d(T[\{v_1, v_2, v_3\}],w)  =2$, $T$ contains the arcs $(w ,v_1)$, $(w, v_2)$ and $(w, v_3)$. {Consider any subset of vertices $U \subset \{v_1, v_2, v_3, v_4, w, v \}$ of size $3$. We will prove  that $|N^+_{T'}(U)|\ge 4$ which allows to conclude that $\mbox{cn}(T') >1$ (from Lemma \ref{lem:obvious}) and consequently $\mbox{cn}(T) >1$ contradicting the assumption that $T$ is $1$-copwin. If $U \ni w$, then  $N^+_{T'}(U) \supset \{v_1, v_2, v_3, v\}$. If $U \ni v$, then $N^+_{T'}(U) \supset \{v_1, v_2, v_3, v_4\}$. If $U = \{v_1, v_2,v_3 \}$, then $N^+_{T'}(U) = \{v_1, v_2, v_3, v_4\}$. If $U = \{v_1, v_2,v_4 \}$, then $N^+_{T'}(U)$ contains $v_2,v_3,w$ and one among $v_1,v_4$, depending on the direction of the arc between $v_1$ and $v_4$. If $U = \{v_1, v_3,v_4 \}$, then $N^+_{T'}(U)$ contains $v_1,v_2,v_4,w$. If $U = \{v_2, v_3,v_4 \}$, then $N^+_{T'}(U)$ contains $v_1,v_3,w$ and one among $v_2,v_4$, depending on the direction of the arc between $v_2$ and $v_4$. } 
\end{proof}
Having  proven the claim, we can even certify that for any triangle $T[\{v_1, v_2, v_3\}]$ of $S$, we have $d(T[\{v_1, v_2, v_3\}],w)  =1$.
\begin{claim}
$d(T[\{v_1, v_2, v_3\}],w)  =1$ for any triangle $T[\{v_1, v_2, v_3\}]$ of $T[S]$.
\end{claim}
\begin{proof}
Using the previous claim, we only have to show that we cannot have a triangle such that $d(T[\{v_1, v_2, v_3\}],w)  \ge 3$. Assume for contradiction that such a triangle exists and let $v_3 v_4... v_l$ ($v_l = w$ and $l \geq 6$)  be a shortest path whose length is equal to $d(T[\{v_1, v_2, v_3\}],w) = l-3$. Observe that $T$ does not contain the arcs $(v_2, v_4)$, $(v_3, v_5)$, $(v_4, v_6)$,...(since otherwise $v_3, v_4, ..., v_l$  would not be a shortest path). Then $T$ contains $(v_4, v_2)$, $(v_5, v_3)$, $(v_6, v_4)$, ..., $(v_l, v_{l-2})$. This leads to the existence of other triangles of $S$ that are closer to $w$ ($T[\{v_2, v_3, v_4\}]$, $T[\{v_3, v_4, v_5\}]$, $T[\{v_4, v_5, v_6\}]$, etc.) 
Note that the triangle $T[\{v_{l-4}, v_{l-3}, v_{l-2}\}]$ is at distance $2$ to $w$ contradicting the previous claim.
\end{proof}

We know from the previous claim that 
$d(T[\{v_1, v_2, v_3\}],w)  =1$. The next claim shows that there is only one vertex in the triangle that is at distance $1$ to $w$.


\begin{claim}
    If $d(T[\{v_1, v_2, v_3\}],w)  = d(v_3,w)$, then $T$ contains the arcs $(w, v_1)$, $(w, v_2)$ and $(v_3, w)$.
\end{claim}
\begin{proof}
{First observe that $v\neq v_1,v_2,v_3$ since it cannot belong to any triangle of $T[S]$.} Consider the subgraph $T'=T[\{v_1, v_2, v_3, w,v  \}]$. From the previous claim, we have $d(T[\{v_1, v_2, v_3\}],w)  = 1$ implying that $T$ contains the arc $(v_3, w)$. 
{Assume that $T$  contains 
$(v_1, w)$ or $(v_2, w)$. 
Let us prove that for each subset $U \subset \{v_1, v_2, v_3, w,v\}$ of size $2$, $|N^+_{T'}(U)| \ge 3$ implying that $\mbox{cn}(T') >1$.
If $v \in U$, then $N^+_{T'}(U)\supset N^+_{T'}(\{v\}) = \{ v_1, v_2, v_3\}$. 
Observe also that if $U=\{v_i,v_j\}$, with $1\le i\neq j \le 3$, then  $N^+_{T'}(U)$ contains $v_{i+1},v_{j+1}$ (indices modulo $3$) and $ w$. Finally, if $U=\{v_i,w\} $, then $\{v_{i+1},v\} \subset N^+_{T'}(U)$ and either $v_i$ or $w$ belong to $N^+_{T'}(U)$.
}
\end{proof}

\begin{figure}[htbp]
\centering
\includegraphics
[scale=0.20]{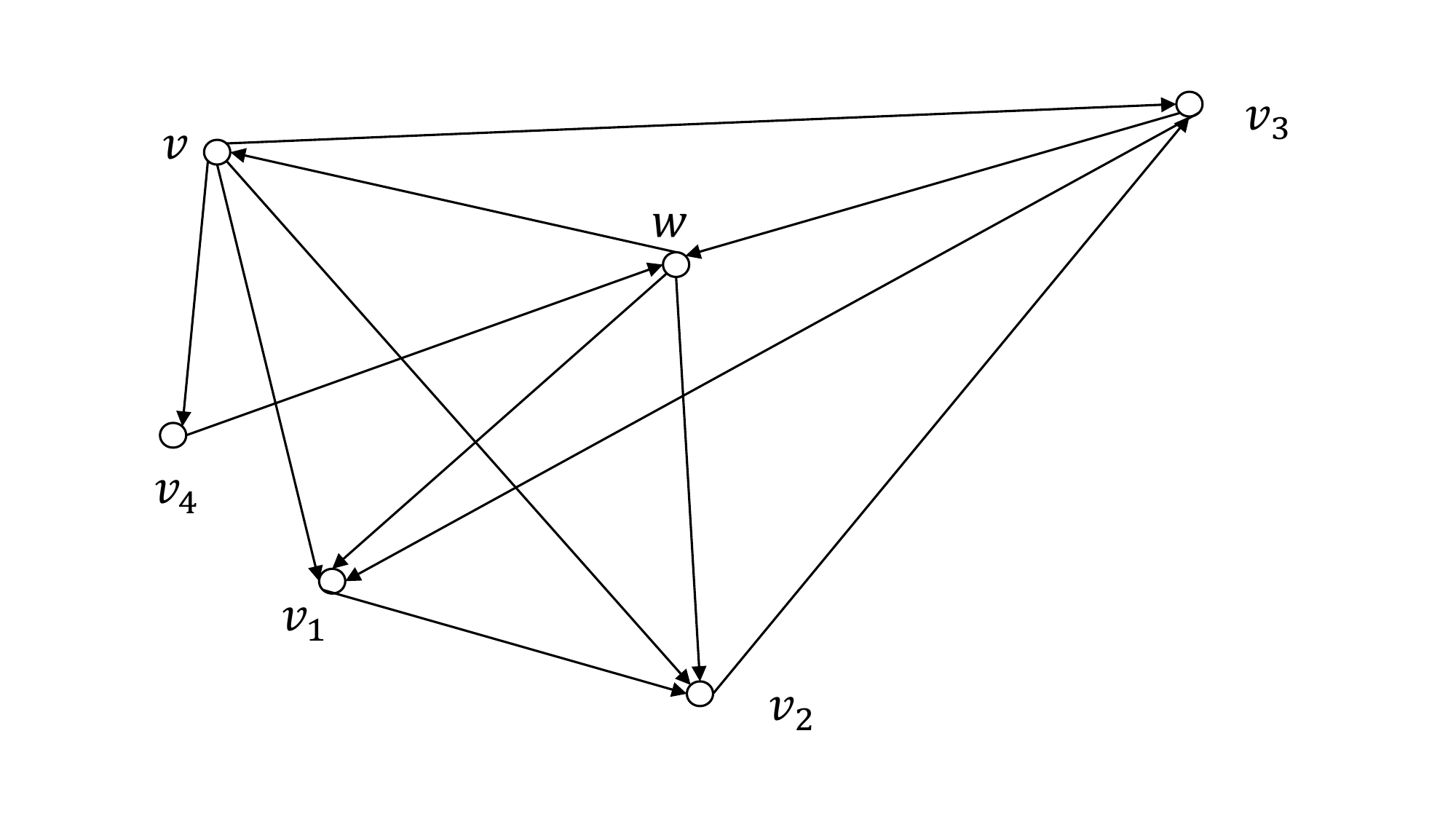} 
\vspace{-5mm}
    \caption{The digraph related to the proof of Theorem \ref{th:cn=1}}
    \label{fig:lastcase}
    \end{figure}
Observe that if $w$ has only one in-neighbor (say $v_3$), then all triangles of $T[S]$ (and thus all cycles) contain $v_3$. A cycle going through $w$ should also contain $v_3$.  Hence $\{v_3\}$ is a feedback vertex set of $T$.   Assume now that $w$ has at least another in-neighbor (say $v_4$)  in addition to $v_3$ and we still have $d(T[\{v_1, v_2, v_3\}],w)  =1$. From the previous claim, we know that $T$ contains $(w, v_1)$,  $(w, v_2)$  and $(v_3, w)$ (see Figure \ref{fig:lastcase}). 
We will focus again on the subgraph $T'= T[\{v_1, v_2, v_3, v_4, w, v\}]$ and show that  for every subset of vertices $U \subset \{v_1, v_2, v_3, v_4, w, v \}$ of size $3$, $|N^+_{T'}(U)|\ge 4$ occurs.  If $U$ contains  $v$, the property obviously holds. If $U = \{v_1,v_2,v_3\}$, then $N^+_{T'}(U) \supset \{v_1, v_2, v_3, w  \}$. If $U = \{v_1, v_2, v_4  \}$, then $N^+_{T'}(U) \supset \{ v_2, v_3, w  \}$ in addition to either $v_1$ or $v_4$ (depending on the direction of the arc between $v_1$ and $v_4$).
If $U = \{v_2, v_3, v_4  \}$, then $N^+_{T'}(U) \supset \{ v_3, v_1, w  \}$ in addition to either $v_2$ or $v_4$ (depending on the direction of the arc between $v_2$ and $v_4$).
If $U = \{v_1, v_3, v_4  \}$, then $N^+_{T'}(U) \supset \{ v_2, v_1, w  \}$ in addition to either $v_3$ or $v_4$ (depending on the direction of the arc between $v_3$ and $v_4$). Let us now consider the case where $U$ contains $w$. Observe that $N^+_{T'}(U) \supset N^+_{T'}(w) = \{v_1, v_2, v\}$. If we add to $U$ either $v_3$ or $v_4$,  then $N^+_{T'}(U)$ will also contain $w$ leading to $|N^+_{T'}(U)|\ge 4$. If $v_2$ is included in $U$, then $v_3$ is added to $N^+_{T'}(U)$ inducing the same conclusion. In other words, any subset $U \subset \{v_1, v_2, v_3, v_4, w, v \}$ of size $3$ containing $w$ satisfies $|N^+_{T'}(U)|\ge 4$.   
This implies that $\mbox{cn}(T) \geq 2$  ending the proof.
\end{proof}

\begin{corollary}
Given a tournament $T$, if a minimum feedback vertex set of $T$ has size two, then $\mbox{cn}(T)=2$.
\end{corollary}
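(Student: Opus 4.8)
The plan is to pin down $\mbox{cn}(T)$ by trapping it between $1$ and $2$ and then excluding the intermediate value $1$. The upper bound is immediate: as recalled in the treatment of round digraphs, every digraph satisfies $\delta^+(D)\le \mbox{cn}(D)\le \mbox{fvs}(D)$, so the hypothesis $\mbox{fvs}(T)=2$ already gives $\mbox{cn}(T)\le 2$. It therefore remains to establish $\mbox{cn}(T)\ge 2$, that is, to rule out $\mbox{cn}(T)\in\{0,1\}$.

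First I would dispose of the value $0$. Since a minimum feedback vertex set has size $2>0$, the tournament is not acyclic and hence contains a directed cycle; placing a robber on such a cycle and letting it run around forever defeats any strategy that uses no cop, so $R_t$ never empties and $\mbox{cn}(T)\ge 1$. To exclude $\mbox{cn}(T)=1$ I would invoke Theorem \ref{th:cn=1}: for a connected tournament, $\mbox{cn}(T)=1$ holds if and only if $T$ admits a feedback vertex set of size one. As the minimum feedback vertex set of $T$ has size exactly $2$, no feedback vertex set of size one exists, so the equivalence forces $\mbox{cn}(T)\neq 1$. Combining this with $1\le \mbox{cn}(T)\le 2$ yields $\mbox{cn}(T)=2$.

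The genuinely non-routine ingredient is entirely contained in Theorem \ref{th:cn=1}; once that characterization is available, the corollary is only a matter of assembling the two elementary bounds. The single point that requires care, and which I expect to be the main obstacle, is the connectivity hypothesis: Theorem \ref{th:cn=1} is stated for connected tournaments, and this assumption cannot be dropped. Indeed, a tournament whose strong components are two disjoint directed triangles, with all cross arcs oriented from the first triangle to the second, has a minimum feedback vertex set of size $2$, yet a single cop wins by clearing the source triangle and then the sink triangle in turn, so $\mbox{cn}=1$. Thus the corollary must be read within the standing assumption of this section that $T$ is connected, and it is precisely this assumption that licenses the application of Theorem \ref{th:cn=1}.
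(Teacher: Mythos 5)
Your proof is correct and takes exactly the route the paper intends: the corollary is stated there without proof as an immediate consequence of Theorem \ref{th:cn=1} combined with the elementary bounds $1\le \mbox{cn}(T)\le \mbox{fvs}(T)=2$, which is precisely your assembly. Your closing caveat is also well taken: ``connected'' in this section must be read as \emph{strongly} connected (the proof of Theorem \ref{th:cn=1} uses that every vertex has both in- and out-neighbors, and that connectivity forces cycles), and your two-triangle tournament correctly witnesses that the statement fails without this standing assumption.
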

Notice that the converse of this corollary is not true, indeed, for every $n\ge 3$ there exists a tournament $T_n$ on $3n-2$ vertices with $\mbox{cn}(T_n)=2$ and such that any feedback vertex set of $T_n$ has size at least $n$. The vertices of $T_n$ can be partitioned into $n-1$ directed triangles $C_1,...,C_{n-1}$ and a single vertex $v$. Besides the arcs of the triangles, $T_n$ contains also arcs from all the vertices of $C_i$ to all the vertices of $C_j$ for $1\le i < j \le n-1$; arcs from $v$ to all vertices of $C_i$, for $i=1,...,n-2$ and arcs from all vertices of $C_{n-1}$ to $v$. {See Figure \ref{lowcnhighfvs} for an example with $n=4$}. It is easy to see that a feedback vertex set must contain $v$ and at least a vertex from all of the $C_i$. On the other hand, two cops can win in $3n-3$ steps by playing one cop always in $v$ and another cop picking the same vertex on $C_i$ for three time steps, for $i=1,...,n-1$.
\begin{figure}[htbp]
\centering
\vspace{-3mm}
\includegraphics
[scale=0.25]{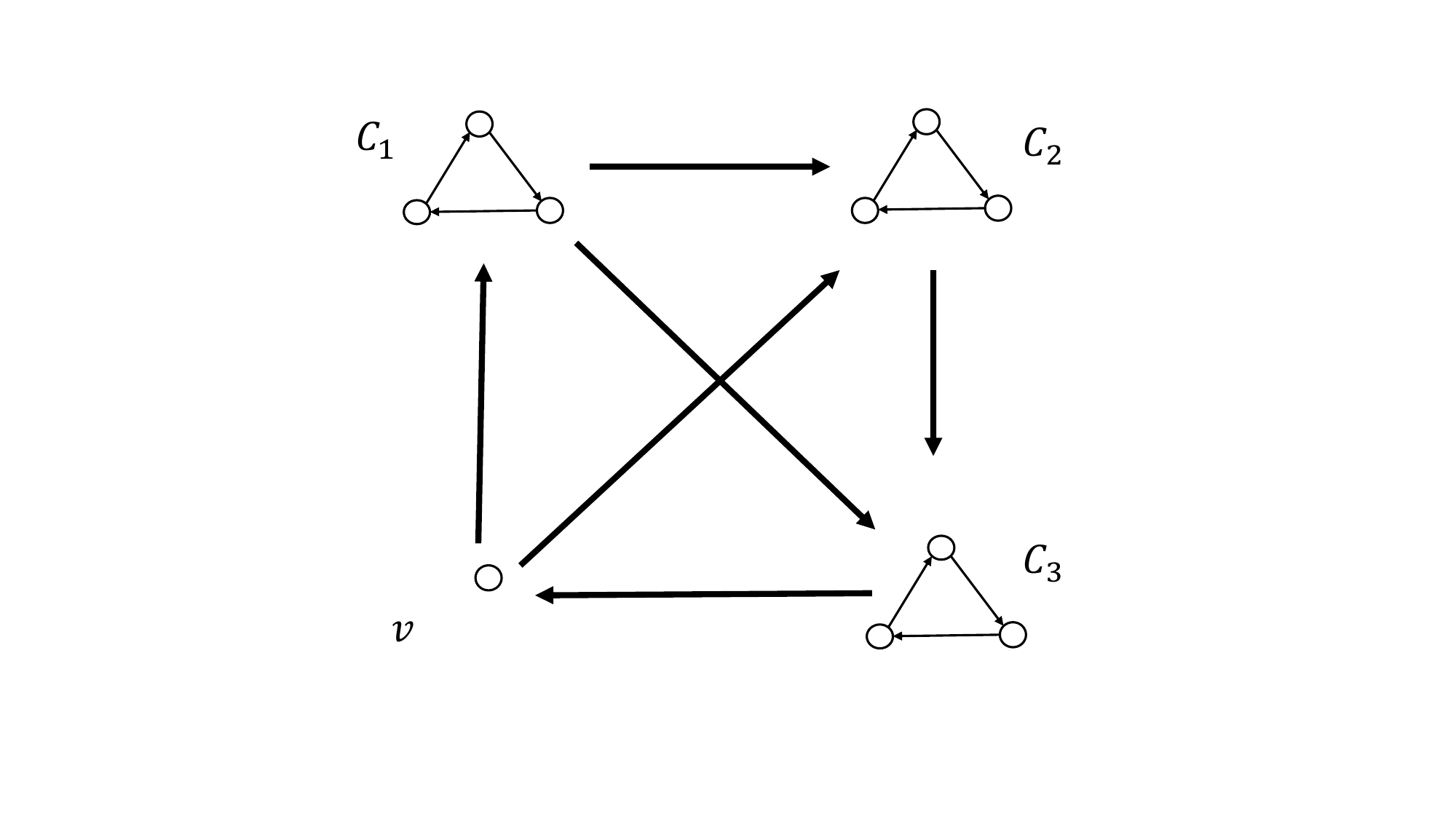} 
\vspace{-5mm}
    \caption{A $2$-copwin tournament of order $3n-2$ (here $n=4$) whose minimum feedback vertex set is of size $n$ (a thick arrow from $C_i$ to $C_j$ represents all arcs from $C_i$ vertices to $C_j$ vertices)}
    \label{fig:k=2}
    \end{figure}\label{lowcnhighfvs}

The existence of a polynomial algorithm to determine whether a given tournament $T=(V,A)$ has $\mbox{cn}(T)=1$ also follows from Theorem \ref{th:cn=1}: {Tarjan's algorithm  \cite{doi:10.1137/0201010} can be used to compute the strong components of $T$ in time $O(|V|+|A|)$ and for each component it is possible }\cite{lokshtanov2016linear} to decide whether there exist a feedback vertex set of size one in time $O(|V|+|A|)$. So we have the following corollary.
\begin{corollary}
Given a tournament $T=(V,A)$, it is possible to determine whether $\mbox{cn}(T)=1$ in time $O(|V|+|A|)$.
\end{corollary}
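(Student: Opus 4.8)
The plan is to derive the corollary directly from Theorem~\ref{th:cn=1} by passing to the strongly connected components of $T$ and running a linear-time test for a feedback vertex set of size one on each of them. First I would invoke the decomposition already recorded for the cop number: $\mbox{cn}(T)$ is the maximum of $\mbox{cn}(S)$ over the strongly connected components $S$ of $T$. For a tournament this is the right reading of that remark, since a tournament is always weakly connected while its condensation is a linear order; the robber can only move downstream in this order, so clearing the components one at a time in topological order uses no more than $\max_S \mbox{cn}(S)$ cops (after a component becomes robber-free it stays so, having no incoming arcs from later components), and restricting any winning strategy of $T$ to a single component yields the matching lower bound $\mbox{cn}(S)\le \mbox{cn}(T)$. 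Hence $\mbox{cn}(T)=1$ if and only if every strongly connected component has cop number at most $1$ and at least one has cop number exactly $1$.

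Next I would analyze the individual components. Each strongly connected component $S$ is an induced subtournament that is itself strongly connected. If $S$ is a single vertex it contains no cycle, so $\mbox{cn}(S)=0$; note that a tournament has no strongly connected component on exactly two vertices, since the unique arc between two vertices cannot make them mutually reachable. If $S$ has at least three vertices it contains a directed triangle and has minimum outdegree at least $1$, so $\mbox{cn}(S)\ge 1$ by Lemma~\ref{lem:obvious}, and Theorem~\ref{th:cn=1} applies verbatim: $\mbox{cn}(S)=1$ exactly when $S$ admits a feedback vertex set of size one, and $\mbox{cn}(S)\ge 2$ otherwise. Combining with the decomposition, $\mbox{cn}(T)=1$ if and only if (i) $T$ contains a cycle, i.e.\ at least one non-trivial component exists, and (ii) every non-trivial component admits a feedback vertex set of size one.

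The algorithm is then immediate. I would run Tarjan's algorithm~\cite{doi:10.1137/0201010} to compute the strongly connected components in time $O(|V|+|A|)$, and on each component run the linear-time detection of a feedback vertex set of size one of~\cite{lokshtanov2016linear}. Because the components partition $V$ and the arcs internal to distinct components are disjoint subsets of $A$, the sum $\sum_S\bigl(|V(S)|+|A(S)|\bigr)$ is $O(|V|+|A|)$, so the per-component tests together cost $O(|V|+|A|)$; verifying conditions (i) and (ii) from their outputs is trivial, which gives the claimed bound.

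The only delicate point, rather than a genuine obstacle, is the reduction to components: one must be careful to justify that $\mbox{cn}(T)$ equals the maximum over strongly connected components and to treat the boundary cases correctly, namely that single-vertex components contribute $0$, that an acyclic (transitive) tournament gives $\mbox{cn}(T)=0\ne 1$, and that no two-vertex strong component can arise. Once the decomposition and Theorem~\ref{th:cn=1} are in place, the remainder is a straightforward accounting of linear-time subroutines over a partition of the arc set.
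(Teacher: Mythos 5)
Your proposal is correct and takes essentially the same approach as the paper: compute the strongly connected components with Tarjan's algorithm and, on each, run the linear-time test of \cite{lokshtanov2016linear} for a feedback vertex set of size one, invoking Theorem \ref{th:cn=1}. The additional care you take in justifying the component decomposition of the cop number and the boundary cases (trivial components, acyclic tournaments, no two-vertex strong components) simply makes explicit what the paper asserts without proof at the start of Section \ref{sec:tour}.
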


When the size $f$ of a minimum feedback vertex set of $T$ is bounded by a constant, one can decide whether $\mbox{cn}(T)=k$ in polynomial time: we provide an FPT algorithm parameterized by the size of a minimum feedback vertex set.\\
 \begin{theorem}
 Given a tournament $T=(V,A)$ and an integer $k$, determining whether $\mbox{cn}(T)=k$ is FPT parameterized by the size of a minimum feedback vertex set of  $T$.
 \end{theorem}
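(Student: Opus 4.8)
The plan is to kernelize: shrink $T$ to an equivalent tournament of size bounded by a function of $f:=\mbox{fvs}(T)$ only, and then decide the reduced instance with the exact exponential solver of Corollary~\ref{exposolver}. First I would reduce deciding ``$\mbox{cn}(T)=k$'' to deciding ``$\mbox{cn}(T)\le k$'' (test this for $k$ and for $k-1$). Since the paper already notes $\mbox{cn}(T)\le\mbox{fvs}(T)=f$, I may assume $k\le f$, as otherwise the answer is trivially negative. I would then compute a minimum feedback vertex set $F$ (with $|F|=f$) by the standard FPT branching on directed triangles: a tournament has a directed cycle iff it has a directed triangle, and each triangle forces one of three vertices into $F$, giving a $3^{f}\cdot\mathrm{poly}(n)$ algorithm. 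Deleting $F$ leaves the acyclic (hence transitive) tournament $T\setminus F$, whose vertices I order as $u_1<\dots<u_m$ with $u_i\to u_j$ iff $i<j$. To each $u_i$ I attach its \emph{type}, the function $F\to\{+,-\}$ recording the direction of each arc between $u_i$ and $F$; there are at most $2^{f}$ types.

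The structural observation driving the reduction is that the robber territory inside the transitive part $U:=\{u_1,\dots,u_m\}$ is almost a suffix. If $i^{*}$ is the least index present in the current territory $R$, then one spreading step makes the entire suffix $\{u_j: j>i^{*}\}$ reachable, while the only extra $U$-vertices come from $N^{+}(R\cap F)\cap U$. Crucially, whether $x\to u_i$ depends only on the type of $u_i$, so $N^{+}(R\cap F)\cap U$ is a union of \emph{full type-classes}. Thus, up to the $\le k$ cops and the $2^{f}$ possibilities for $R\cap F$, the territory in $U$ is a suffix together with some injected type-classes. The only index information that matters is, for each type, its minimum index (how far down an injection can drag the floor $i^{*}$) and its maximum index (the largest floor at which that type still appears in the suffix). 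These $\le 2\cdot 2^{f}$ distinguished indices cut $U$ into $2^{O(f)}$ consecutive intervals on which the set of ``active'' types is constant.

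The reduction rule is then: inside each such interval, keep for every type only about $k+1$ representatives (enough that the $\le k$ cops cannot wipe out a whole type-class in a single step, but no more), deleting the rest. The forward direction of correctness, $\mbox{cn}(T)\le k\Rightarrow\mbox{cn}(T')\le k$, is immediate from the easy fact that $\mbox{cn}$ is monotone under vertex deletion: intersecting each $W_t$ of a winning strategy with the smaller vertex set keeps the robber territory inside the old one, so the projected strategy still wins. The converse requires lifting a winning strategy from the kernel $T'$ back to $T$, equivalently rerouting any robber escaping on $T$ into an escaping robber on $T'$ by replacing every visit to a deleted vertex with a surviving vertex of the \emph{same type}; this mirrors the equivalence-class rerouting used in the Claim of the $w+v_w$ theorem, now exploiting transitivity and the twin property of equal-type vertices. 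The resulting kernel has $2^{O(f)}\cdot(k+1)+f=2^{O(f)}$ vertices (using $k\le f$), built in time $g(f)\cdot\mathrm{poly}(n)$; applying Corollary~\ref{exposolver} to it costs $4^{2^{O(f)}}$, so the whole procedure runs in $2^{2^{O(f)}}\cdot\mathrm{poly}(n)$ and is FPT.

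The main obstacle is the correctness of the reduction rule, i.e.\ the order-respecting rerouting. Unlike the disjoint components in the $w+v_w$ theorem, equal-type vertices here are linearly ordered by the transitive arcs, so a surviving representative of a deleted middle vertex $u$ must simultaneously be an out-neighbour of the robber's previous position, an in-neighbour of its next position, and cop-free at that step. Keeping $k+1$ representatives \emph{within the same interval} as $u$ (so their indices interleave correctly with $u$'s neighbours, and at least one escapes the $\le k$ cops), while invoking Lemma~\ref{lem:obvious} to rule out premature capture, is exactly the delicate point; handling the interaction between the suffix-threshold dynamics and the $F$-injections (which can transiently create low vertices before they re-coalesce into a suffix) is where the argument must be carried out with care.
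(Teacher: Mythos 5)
Your overall scaffolding is fine (reducing ``$=k$'' to ``$\le k$'', assuming $k\le f$ since $\mbox{cn}(T)\le\mbox{fvs}(T)$, computing $F$ by triangle-branching, and the forward direction of the kernel via monotonicity of $\mbox{cn}$ under vertex deletion), but the proposal has a genuine gap exactly at the point you flag: the backward direction of the reduction rule is not established, and as stated it appears false. Equal-type vertices in the transitive part $U$ are \emph{not} twins: they are linearly ordered among themselves and relate differently to every other vertex of $U$. A robber escaping in $T$ may traverse a long run $u_{i_1}\to u_{i_2}\to\cdots$ inside a single interval, and rerouting it into the kernel requires a chain of survivors with \emph{strictly increasing indices}, each of the correct type, each cop-free at its own specific time step, and each correctly adjacent to the (possibly rerouted) predecessor and successor. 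Keeping $k+1$ representatives per type per interval guarantees only that \emph{one} survivor is cop-free at \emph{one} given step; it does not control a chain of simultaneous order-plus-time constraints whose length is bounded by the interval size (polynomial in $n$), not by any function of $f$ and $k$. The analogous rerouting in the $w+v_w$ theorem works precisely because each rerouted segment is confined to one of $wk+1$ vertex-disjoint isomorphic components during a window of at most $w$ steps, so some copy is cop-free for the \emph{entire} window; no such disjointness or bounded-window structure is available here, so the counting argument does not transfer. To salvage the kernel you would need a normalization lemma showing an escaping robber never needs long same-interval runs, which you neither state nor prove.

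For contrast, the paper avoids lifting strategies altogether. By Observation \ref{piconstr} the game is exactly a reachability question on robber territories in the transition digraph $\Pi_k$, and the paper quotients $\Pi_k$ by the equivalence $N^+(R_1)=N^+(R_2)$: with a feedback vertex set $F$ of size $f$, the class of $R$ is determined by $N^+(R)\cap F$, by $R\cap F$ (hence $N^+(R\cap F)$), and by the first vertex of $R\setminus F$ in the topological order of the transitive remainder, giving at most $4^f(n-f+1)$ classes. Since equivalent territories have identical out-neighborhoods in $\Pi_k$, reachability from $[V]$ to $[\emptyset]$ in the quotient (constructible in time $O(2^{4f}n^4)$) decides the game. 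This exploits the same structural facts you identified (types over $F$, suffix behavior in the transitive part), but at the level of territories rather than vertices, which is why no rerouting of robber trajectories---the delicate step your plan leaves open---is ever needed; it also yields a single-exponential dependence on $f$ rather than the doubly exponential cost of running the Corollary \ref{exposolver} solver on a $2^{O(f)}$-vertex kernel.
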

 \begin{proof}
{Consider the construction of the transition digraph $\Pi_k$ (as defined in Section \ref{sec:prem}) applied to the tournament $T$ and let $V_1,V_2\subseteq V$ be two of its vertices}. We say that $V_1$ is equivalent to $V_2$ if
 $N^+(V_1)=N^+(V_2)$, where $N^+ \ (N^-)$ refers to the out- (in-) neighborhood in $T$. 
 Let $F\subseteq V$ be a feedback vertex set of $T$ of size $f$, the number of equivalence classes is upper bounded by $4^f\cdot (n-f+1)$.\\Indeed let $v_1,...,v_{n-f}$ (where $n=|V|$) be the topological ordering of $V\setminus F$. The equivalence class of any $R\subseteq V$ is completely determined by $N^+(R)\cap F, \ N^+(R\cap F)$ and the vertex (if any), $v_{i_0}$, which ranks first in the topological order of  $R\setminus F$. There are at most $2^f$ possible choices for $N^+(R) \cap F$, at most $2^f$ possible choices for $R \cap F$, and $n-f+1$ choices for $v_{i_0}$ (including the choice $R\setminus F=\emptyset$). \\
 Observe that two equivalent subsets have the same out-neighbors in $\Pi_k$, therefore there is a path in $\Pi_k$, from $V$ to $\emptyset$ if and only if there is a path from the class of $V$ to that of $\emptyset$ in the quotient digraph. Therefore, using Observation \ref{piconstr}, it suffices to show that one can construct this quotient digraph in polynomial time to have our FPT algorithm.\\
 To construct the vertices of the quotient, we can build an exhaustive list of candidates by considering all possible triples $(F',F'',v_{i_0})$, where $F',F''\subseteq F$ and $v_{i_0} \in V\setminus F$ or is blank. In order to check whether a triple can be associated to a representative $R$, with $F'=N^+(R)\cap F$, $F''=R\cap F$ and $v_{i_0}$ (if not blank) ranking first in the topological ordering of $R\setminus F$, one can start computing the set $N^-(F')\cup v_{i_0}$: delete from it all vertices (if any) having an out-neighbor in $F\setminus F'$; then delete from it all vertices (if any) ranking before $v_{i_0}$ in the topological ordering of $V\setminus F$ (or delete all vertices in common with $V\setminus F$ if $v_{i_0}$ is blank); finally delete all vertices in common with $F\setminus F''$. If the remaining set $R$ is such that $N^+(R)\cap F=F'$ and $R\cap F=F''$, then the triple is associated to the representative $R$, otherwise the triple cannot produce a representative with the required properties. To complete the construction of the vertices of the quotient, we are left to check whether any two $R,R'$ obtained as above are equivalent, which can be easily done by looking at their out-neighborhoods.\\
Now, to construct an arc $([R],[R'])$ in the quotient, one should make sure there is an arc in $\Pi_k$ from $R$ to a set $R''$ equivalent to $R'$, namely \begin{equation}\label{edgequotient}
 \exists \ R''\subseteq V \mbox{ such that }R'' \subseteq N^+(R), \ \ |N^+(R)\setminus R''|\le k \mbox{ and } N^+(R'')=N^+(R').
 \end{equation}
 This can be done as follows: 
 \begin{enumerate}
     \item compute $S_1:=N^+(N^+(R))$;
     \item compute $S_2:=N^-(S_1\setminus N^+(R'))\cap N^+(R)$;
     \item check whether $|S_2|\le k$: if not, stop without creating the arc;
     \item compute $R'':=(N^+(R))\setminus S_2$;
     \item if $N^+(R'')=N^+(R')$, the arc is created, otherwise the arc is not created.
 \end{enumerate}      {Notice the condition of Step 3 is necessary to fulfill Property (\ref{edgequotient}). Indeed all vertices of $S_2$ must be removed from $N^+(R)$ in order to obtain $R''\subseteq N^+(R)$ with $N^+(R'')\subseteq N^+(R')$; moreover the condition of Step 5 is obviously necessary to have Property (\ref{edgequotient}). Finally, notice that, if at Step 5 an arc is created, it means that $R'' \subseteq N^+(R)$ (by construction at Step 4) and $|N^+(R)\setminus R''|=|S_2|\le k$ (after Step 3) and $N^+(R'')=N^+(R')$ (as checked at Step 5), therefore $R''$ fulfills Property (\ref{edgequotient}).}\\ 
 To summarize, our FPT algorithm will start by finding a feedback vertex set $F$ of minimum size; then compute the topological order of $V\setminus F$; then compute the quotient digraph as described above and search there for a path from $[V]$ to $[\emptyset]$. Notice that $F$ can be calculated, for example, using an algorithm proposed by Dom et al. \cite{DOM201076} which runs in time $O(2^fn^3)$; the time complexity of all other routines is dominated by that of constructing the arcs of the quotient digraph: here for every possible pair of vertices of the quotient, steps 1 to 5 are executed (in $O(n^2)$ time), so the arc creation procedure takes time $O((4^fn)^2n^2)$. It follows that the time complexity of our FPT algorithm is $O(2^{4f}n^4)$. 
\end{proof}

\section{Connections with  binary matrix mortality}
\label{sec:morta}
{
Given a finite set of integer valued square matrices, the mortality problem asks whether they are mortal, namely whether there exists a product of these matrices that gives the zero matrix. This problem is undecidable even for $3x3$ matrices \cite{paterson1970}. We focus on matrices having nonnegative entries: in this case, what only matters for mortality is whether or not an entry is $0$, so the problem could be thought over binary matrices. Nonnegative (or binary) matrix mortality is proven to be equivalent to the problem of deciding whether a nondeterministic finite state automaton has a killing word \cite{doi:10.1137/19M1250893} and this problem is PSPACE-complete \cite{KAO20095010}. We will see how the problem of deciding whether a digraph has cop number at most $k$ reduces to a binary matrix mortality problem.
}

Let $B$ be the adjacency matrix of $D=(V,A)$ and let $C$ be the adjacency matrix of a complete  graph (square matrix with $1$ everywhere). 
Given  $W \subset V$, let $\bar{I}_{W}$ be the diagonal matrix where $\bar{I}_{i i} = 0$, $\forall i \in W$ and $\bar{I}_{i i} = 1$ otherwise.  Observe that $B  \bar{I}_{W}$ is then the matrix obtained from $B$ by switching to $0$ all coefficients of each column related to a vertex $i \in W$.  $B \bar{I}_{W}$ represents the possible moves that the robber can have if cops are on $W$ without being immediately captured.  
Assume that $W_1, W_2, ..., W_l$ is a winning strategy. Initially, the robber can choose any position outside $W_1$. This can be seen as a robber move in a complete graph where cops are on $W_1$ and can then be represented through the matrix $C \bar{I}_{W_1}$. For the second time step, the possible robber moves are modeled by matrix $B \bar{I}_{W_2}$. Assuming that $W_1, W_2, ..., W_l$ is a winning strategy is then simply equivalent to say that $C \bar{I}_{W_1} B \bar{I}_{W_2} ...B \bar{I}_{W_l} = 0$ (the existence of any nonzero coefficient of this matrix immediately implies the existence of a robber strategy to escape). Since  $C$ is the all-one matrix and all matrices we are manipulating are non-negative (in fact they are binary), the previous equality holds, if and only if, $\bar{I}_{W_1} B \bar{I}_{W_2} ...B \bar{I}_{W_l} = 0$, so matrix $C$ can be skipped. This is stated below.
\begin{lemma}
  $W_1, W_2, ..., W_l$ is a winning strategy, if and only if,  $\bar{I}_{W_1} B \bar{I}_{W_2} ...B \bar{I}_{W_l} = 0$.   
  \label{lem:matr}
\end{lemma}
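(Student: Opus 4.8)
The plan is to prove the equivalence by tracking how the robber territory evolves and showing that the matrix product computes exactly the indicator vector of the set of positions still reachable by an uncaptured robber. The central observation, already suggested in the text preceding the statement, is that for a set $W$ the matrix $B\bar{I}_{W}$ acts on a (row) indicator vector of the current robber territory $R$ by first applying $B$ (which spreads to the out-neighborhood $N^+(R)$) and then zeroing out the columns indexed by $W$ (which removes the vertices covered by the cops). Concretely, I would first establish the book-keeping lemma that if $\mathbf{1}_R$ denotes the $0/1$ row vector supported on $R\subseteq V$, then $\mathbf{1}_R\, B\, \bar{I}_{W}$ has a nonzero entry in position $j$ if and only if $j\in N^+(R)\setminus W$. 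This is immediate: $(\mathbf{1}_R B)_j=\sum_{i\in R}B_{ij}$ counts in-neighbors of $j$ inside $R$, so it is nonzero exactly when $j\in N^+(R)$, and right-multiplication by $\bar{I}_W$ kills precisely the columns in $W$.

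Next I would set up the induction. Recall from the problem definition that $R_1 = V\setminus W_1$ and $R_{t+1}=N^+(R_t)\setminus W_{t+1}$ for $t\ge 1$, and that the strategy is winning iff some $R_t=\emptyset$ (equivalently, the final territory is empty). I would prove by induction on $l$ that the support of the vector $\mathbf{1}_V\, \bar{I}_{W_1} B \bar{I}_{W_2}\cdots B\bar{I}_{W_l}$ is exactly $R_l$. For the base case $l=1$, $\mathbf{1}_V\,\bar{I}_{W_1}$ is supported on $V\setminus W_1=R_1$. For the inductive step, assume the support of $\mathbf{1}_V\,\bar{I}_{W_1}B\cdots B\bar{I}_{W_{l-1}}$ is $R_{l-1}$; applying the book-keeping lemma with $R=R_{l-1}$ and $W=W_l$ shows the support of the length-$l$ product is $N^+(R_{l-1})\setminus W_l=R_l$, as required. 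Since all matrices are nonnegative, a zero row vector is obtained iff the support is empty, so $\mathbf{1}_V\,\bar{I}_{W_1}B\cdots B\bar{I}_{W_l}=0$ iff $R_l=\emptyset$, which is precisely the condition that $W_1,\dots,W_l$ is winning (the robber's territory becomes empty by step $l$).

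Finally I would argue why it suffices to track the single starting vector $\mathbf{1}_V$ rather than the whole matrix $\bar{I}_{W_1}B\cdots B\bar{I}_{W_l}$. The subtle point is that $\bar{I}_{W_1}$ is itself a diagonal $0/1$ matrix, so the matrix product $\bar{I}_{W_1}B\cdots B\bar{I}_{W_l}$ is the zero matrix iff \emph{every} row is zero; but the nonzero rows of $\bar{I}_{W_1}$ are exactly those indexed by $V\setminus W_1$, and each such row, say row $i$, evolves independently as $\mathbf{1}_{\{i\}}B\bar{I}_{W_2}\cdots B\bar{I}_{W_l}$. By the same induction its support is the set of vertices reachable from $i$ (avoiding captures), so the whole matrix vanishes iff \emph{no} starting vertex $i\in V\setminus W_1$ leaves an uncaptured descendant, which is again exactly the winning condition. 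Equivalently, one notes that row $i$ of $\bar{I}_{W_1}B\cdots B\bar{I}_{W_l}$ is nonzero for some $i$ iff there is a valid robber walk escaping the cops, matching the definition of a non-winning strategy; so the matrix equality $=0$ and the winning condition are equivalent. I expect the only genuine obstacle to be stating the support/nonnegativity book-keeping cleanly enough that the ``matrix is zero iff every robber-reachable set is empty'' step reads as a one-line consequence rather than requiring a separate argument for the all-ones vector reduction; the computations themselves are routine.
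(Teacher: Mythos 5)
Your proof is correct and takes essentially the same route as the paper: the paper reads the nonzero entries of $C\,\bar{I}_{W_1} B \bar{I}_{W_2} \cdots B \bar{I}_{W_l}$ (with $C$ the all-ones matrix encoding the robber's free initial placement) as escaping robber walks and then discards $C$ by nonnegativity, which is exactly your reduction to the all-ones row vector together with the identification of the product's support with the territory $R_l$. Your explicit induction via $R_{t+1}=N^+(R_t)\setminus W_{t+1}$ simply formalizes the step the paper treats as immediate.
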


An immediate corollary is that $D$ and its reverse digraph (say $\overleftarrow{D}$) have the same cop number and the same capture time.  

\begin{corollary}
    $\mbox{cn}(D) = \mbox{cn}(\overleftarrow{D})$ and $ct(D) = ct(\overleftarrow{D})$
    \label{coro:rev}
\end{corollary}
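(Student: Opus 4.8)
The plan is to leverage Lemma~\ref{lem:matr}, which characterizes a winning strategy $W_1,\dots,W_l$ on $D$ by the matrix equation $\bar{I}_{W_1} B \bar{I}_{W_2} \cdots B \bar{I}_{W_l} = 0$, where $B$ is the adjacency matrix of $D$. The key observation is that the adjacency matrix of the reverse digraph $\overleftarrow{D}$ is exactly the transpose $B^{\mathsf T}$, since reversing all arcs swaps rows and columns. I would therefore start from a winning strategy on $D$ and transpose the defining equation to produce a winning strategy on $\overleftarrow{D}$.

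First I would take the matrix identity for a winning strategy on $D$ and apply the transpose operation. Since transposition reverses the order of a product and since each $\bar{I}_{W_i}$ is a diagonal matrix (hence symmetric, $\bar{I}_{W_i}^{\mathsf T} = \bar{I}_{W_i}$), we get
\begin{equation*}
\left(\bar{I}_{W_1} B \bar{I}_{W_2} \cdots B \bar{I}_{W_l}\right)^{\mathsf T}
= \bar{I}_{W_l} B^{\mathsf T} \bar{I}_{W_{l-1}} \cdots B^{\mathsf T} \bar{I}_{W_1}.
\end{equation*}
A matrix is zero if and only if its transpose is zero, so the right-hand side equals $0$ precisely when the left-hand side does. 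Reading this through Lemma~\ref{lem:matr} applied to $\overleftarrow{D}$ (whose adjacency matrix is $B^{\mathsf T}$), the sequence $W_l, W_{l-1}, \dots, W_1$ is a winning strategy on $\overleftarrow{D}$ using exactly the same cop sets, merely in reverse temporal order. This gives a bijection between winning strategies on $D$ of length $l$ and winning strategies on $\overleftarrow{D}$ of length $l$ that preserves the maximum set size $\max_i |W_i|$ and the length $l$.

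Since the correspondence preserves the number of cops used, the minimum over all winning strategies is the same on both digraphs, yielding $\mbox{cn}(D) = \mbox{cn}(\overleftarrow{D})$. For the capture time, I would note that reversing the order of the strategy preserves its length, so among strategies using the optimal number of cops the minimum achievable length is identical; hence $ct(D) = ct(\overleftarrow{D})$. The main (and only mild) obstacle is the bookkeeping around the leading all-ones matrix $C$: I would invoke the already-established fact, stated just before Lemma~\ref{lem:matr}, that $C$ can be dropped because all matrices are nonnegative, so the clean product form with the initial $\bar{I}_{W_1}$ is exactly what transposes symmetrically. With that in hand, the argument is essentially the symmetry of the transpose, and no further computation is required.
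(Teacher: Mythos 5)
Your proposal is correct and follows essentially the same route as the paper: transpose the matrix identity from Lemma~\ref{lem:matr}, use the symmetry of the diagonal matrices $\bar{I}_{W_i}$ and the fact that $B^{\mathsf T}$ is the adjacency matrix of $\overleftarrow{D}$, and conclude that the reversed sequence $W_l,\dots,W_1$ is a winning strategy on $\overleftarrow{D}$. Your added remarks on the preserved cop count and strategy length (giving $ct(D)=ct(\overleftarrow{D})$) are exactly the implicit bookkeeping in the paper's argument.
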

\begin{proof}
Let $W_1, W_2, ..., W_l$ be a winning strategy over $D$. From Lemma \ref{lem:matr}, we can write that $\bar{I}_{W_1} B \bar{I}_{W_2} ...B \bar{I}_{W_l} = 0$. Taking the transpose and using the fact that matrices of type  $\bar{I}_{W}$ are symmetric, we deduce that $\bar{I}_{W_l} B^T \bar{I}_{W_{l-1}} ...B^T \bar{I}_{W_1} = 0$. Since the adjacency matrix of $\overleftarrow{D}$ is $B^T$ and using again Lemma \ref{lem:matr}, we can assert  that $W_{l}, W_{l-1}, ..., W_1$ is a winning strategy over $\overleftarrow{D}$.
\end{proof}

Corollary \ref{coro:rev}  implies that any valid bound for $\mbox{cn}(D) = \mbox{cn}(\overleftarrow{D})$. The lower bound recalled in Introduction and generalized in  Lemma \ref{lem:obvious} can then be improved into $\max_{S \subset V} \max\left(\delta^+(D[S]),\delta^-(D[S]) \right)$, where $\delta^-(D[S])$ is the minimum indegree in the induced graph $D[S]$.
\begin{corollary}
    $\mbox{cn}(D) \ge \max\limits_{S \subset V} \mbox{ }  \max \left(\delta^+(D[S]),\delta^-(D[S]) \right)$.
\end{corollary}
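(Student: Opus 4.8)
The plan is to derive this lower bound directly from the already-established reversal symmetry, Corollary \ref{coro:rev}, combined with the generalized outdegree bound from Lemma \ref{lem:obvious}. The key observation is that Lemma \ref{lem:obvious} (and in particular its specialization to $p=1$, which is the minimum-outdegree bound $\mbox{cn}(D)\ge\delta^+(D)$) already gives us control over out-neighborhoods, and applying the same bound to every induced subgraph yields $\mbox{cn}(D)\ge\max_{S\subset V}\delta^+(D[S])$. The task is then simply to upgrade $\delta^+$ to $\max(\delta^+,\delta^-)$ by feeding the reverse digraph into the same machinery.

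First I would recall that the minimum-outdegree lower bound, applied not to $D$ but to each induced subdigraph $D[S]$, gives $\mbox{cn}(D[S])\ge\delta^+(D[S])$; since taking an induced subgraph can only help the cops (a winning strategy on $D$ restricts to one on $D[S]$), we have $\mbox{cn}(D)\ge\mbox{cn}(D[S])\ge\delta^+(D[S])$ for every $S\subset V$, hence $\mbox{cn}(D)\ge\max_{S}\delta^+(D[S])$. Next I would invoke Corollary \ref{coro:rev}: since $\mbox{cn}(D)=\mbox{cn}(\overleftarrow{D})$, and since the induced subgraph operation commutes with reversal (i.e.\ $\overleftarrow{D}[S]=\overleftarrow{D[S]}$), the same argument applied to $\overleftarrow{D}$ yields $\mbox{cn}(D)=\mbox{cn}(\overleftarrow{D})\ge\max_{S}\delta^+(\overleftarrow{D}[S])=\max_{S}\delta^-(D[S])$, using that the out-neighbors of a vertex in the reverse digraph are exactly its in-neighbors in the original. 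Combining the two inequalities gives $\mbox{cn}(D)\ge\max_S\max(\delta^+(D[S]),\delta^-(D[S]))$, which is the claim.

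There is essentially no hard part here: the statement is framed in the text as an immediate consequence of Corollary \ref{coro:rev}, and the whole proof is a one-line manipulation once the two ingredients (the induced-subgraph outdegree bound and reversal invariance) are in place. The only point that requires a word of care is verifying that the cop number is monotone under taking induced subgraphs, so that Lemma \ref{lem:obvious} can legitimately be applied to $D[S]$ rather than $D$; but this is clear since any robber strategy confined to $S$ in $D[S]$ is also a legal robber strategy in $D$ (its out-neighborhoods within $S$ are unchanged), so a winning cop strategy on $D$ in particular defeats every robber that never leaves $S$. I would present the argument compactly, treating it as the stated corollary of the reversal result rather than as an independent theorem.
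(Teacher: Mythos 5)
Your proof is correct and takes essentially the same route as the paper, which presents this corollary as an immediate consequence of Corollary \ref{coro:rev} applied to the known lower bound $\mbox{cn}(D)\ge\max_{S\subset V}\delta^+(D[S])$: reversal invariance turns that outdegree bound on $\overleftarrow{D}$ into the indegree bound on $D$, exactly as you argue. Your additional verification that the cop number is monotone under induced subgraphs (so Lemma \ref{lem:obvious} applies to each $D[S]$) is sound and fills in a step the paper simply imports from the Introduction.
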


An obvious observation is that the cop number of a graph does not change if a vertex of out-degree $0$ is deleted. From Corollary \ref{coro:rev}, we can assert that it also stays the same if a vertex of in-degree $0$ is removed.  Let us then assume that $\delta^+(D) \geq 1$ and $\delta^-(D) \geq 1$. This is equivalent to say that each column and each row of the adjacency matrix $B$ contains at least one nonzero coefficient.

\begin{lemma}
    Let $D =(V,A)$ be such that $\delta^+(D) \geq 1$ and $\delta^-(D) \geq 1$, and let $W_1, W_2, ..., W_l$  represent a cop strategy. The three following properties are then equivalent:
    \begin{enumerate}
        \item $\bar{I}_{W_1} B \bar{I}_{W_2} ...B \bar{I}_{W_l} = 0$.
    \item $B \bar{I}_{W_1} B \bar{I}_{W_2} ...B \bar{I}_{W_l} = 0$.
        \item $\bar{I}_{W_1} B \bar{I}_{W_2} ...B \bar{I}_{W_l} B = 0$.
    \end{enumerate}
    \end{lemma}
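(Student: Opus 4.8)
The plan is to notice that the interesting content is only one direction. Writing $M := \bar{I}_{W_1} B \bar{I}_{W_2} \cdots B \bar{I}_{W_l}$, property~1 says $M=0$, property~2 says $BM=0$, and property~3 says $MB=0$. Trivially $M=0$ implies both $BM=0$ and $MB=0$, so property~1 implies properties~2 and~3. The entire point is therefore to prove the two reverse implications $2 \Rightarrow 1$ and $3 \Rightarrow 1$. The key structural observation, which I would state first, is that $M$ is a product of nonnegative matrices (each $\bar{I}_{W_i}$ is a nonnegative diagonal matrix and $B$ is binary), hence $M$ itself is nonnegative; consequently in any product of these matrices no cancellation can occur, and a sum of such products is zero if and only if each summand is zero.

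For $2 \Rightarrow 1$, I would argue entrywise. Assume $BM=0$ and fix arbitrary indices $k$ and $j$; the goal is $M_{kj}=0$. Since $\delta^-(D)\ge 1$, vertex $k$ has an in-neighbor, i.e.\ there is an index $i$ with $B_{ik}=1$ (equivalently, column $k$ of $B$ is nonzero). Then $0=(BM)_{ij}=\sum_p B_{ip}M_{pj}$, and because every term is nonnegative, each must vanish; taking $p=k$ gives $B_{ik}M_{kj}=M_{kj}=0$. As $k$ and $j$ were arbitrary, $M=0$, which is property~1.

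For $3 \Rightarrow 1$ the argument is symmetric, using the other degree hypothesis. Assuming $MB=0$, fix indices $i$ and $k$; since $\delta^+(D)\ge 1$, vertex $k$ has an out-neighbor, so there is an index $j$ with $B_{kj}=1$ (row $k$ of $B$ is nonzero). Reading off $0=(MB)_{ij}=\sum_p M_{ip}B_{pj}$ and isolating the nonnegative term $p=k$ yields $M_{ik}B_{kj}=M_{ik}=0$, whence $M=0$ again.

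There is no serious obstacle here: the whole proof rests on nonnegativity ruling out cancellation. The only point requiring care is the bookkeeping of which hypothesis applies on which side. The ``no zero column'' condition $\delta^-(D)\ge 1$ is exactly what lets a left multiplication by $B$ be undone (so it drives $2\Rightarrow 1$), whereas the ``no zero row'' condition $\delta^+(D)\ge 1$ handles right multiplication (so it drives $3\Rightarrow 1$); swapping these would invalidate the argument, so I would keep the row/column correspondence explicit throughout.
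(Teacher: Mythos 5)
Your proof is correct and follows essentially the same route as the paper's: the direction $1\Rightarrow 2,3$ is trivial, and the reverse directions use nonnegativity together with the fact that $\delta^-(D)\ge 1$ (no zero column of $B$) drives $2\Rightarrow 1$ while $\delta^+(D)\ge 1$ (no zero row) drives $3\Rightarrow 1$, exactly as in the paper. The only cosmetic difference is that you argue directly entrywise where the paper argues by contradiction from a nonzero entry of the product, which is the same argument in contrapositive form.
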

\begin{proof}
Property 1 obviously implies properties 2 and 3 since we are just multiplying a $0$ matrix by $B$ (either left or right). Suppose now that 2 holds: $B \bar{I}_{W_1} B \bar{I}_{W_2} ...B \bar{I}_{W_l} = 0$. Assume by contradiction that the matrix $K = \bar{I}_{W_1} B \bar{I}_{W_2} ...B \bar{I}_{W_l}  \neq 0$ implying that $K_{ij} = 1$ for some $i, j$. Since the $i^{th}$ column of $B$ contains at least one nonzero coefficient, $B K$ cannot be $0$ contradicting property 2. In other words, 2 implies 1. One can easily show that $3$ implies $1$ using the fact that each row of $B$ contains at least one nonzero coefficient. 
\end{proof}

The previous lemma allows us to make a connection with the binary matrix mortality problem. 
Assume, for example, that we aim to check whether $\mbox{cn}(D) \le k$. Then we can build for each subset $W$ of size $k$ the matrix $B_W = B \bar{I}_{W}$. From the previous lemma (property 2), $\mbox{cn}(D) \le k$ holds, if and only if,  some product $B_{W_1} B_{W_2}...B_{W_l}$ is equal to $0$ where $|W_i|=k$.   

Remember that we proved that even deciding whether $\mbox{cn}(D) \le 1$ is NP-hard. In this case, each matrix $B_{W}$ correspond to the matrix $B$ with one column switched to $0$, and the number of such matrices is equal to $n$. In other words, we can say that the binary matrix mortality problem remains NP-hard even if the number of matrices is $n$ and each matrix is obtained from the same main $n$-square matrix by setting to $0$ the coefficients of a column. On the contrary, in case the above $n$-square matrix is symmetric and has a $0$ diagonal, checking mortality is polynomial, because it reduces to checking whether an undirected loopless graph is $1$-copwin and this can be polynomially characterized \cite{HASLEGRAVE201412}.
}

\section{Conclusion and future work}
{
We have studied a cops and robber game on directed graphs, which extends the hunter and rabbit problem defined on undirected graphs. We mostly focused on complexity issues regarding the computation of the cop number of a digraph and showed it is NP-hard to decide whether a digraph is $1$-copwin. We proved the latter problem is polynomial on tournaments by showing that a tournament is $1$-copwin if and only if it has a feedback vertex set of size $1$. We also provided an FPT algorithm parameterized by the size of a minimum feedback vertex set. It would be natural and interesting to assess the complexity of determining the cop number of a tournament.\\
It would also be nice to establish whether {deciding if the cop number equals one} on general digraphs is in NP. The problem might well be PSPACE-complete as the more general problem of binary matrix mortality. Another, related, possible future direction is determining tighter bounds for the maximum capture time over digraphs with $n$ vertices. We have seen that an upper bound of $2^n -1$ exists. If a polynomial upper bound should be found, this would imply computing the cop number is in NP. On the other hand, in \cite{10.1007/978-3-031-72621-7_8} is shown a lower bound of $2^n-1$ for the length of the shortest mortal product of $n$ binary $n\times n$ matrices and it might somehow be extended to the case of the maximum capture time.}
{Other future directions include further study of the approximability of the cop number and, possibly, the design of approximation algorithms.}

\section*{Acknowledgments}
We would like  to thank Antoine Amarilli, Harmender Gahlawat and Vlad-Stefan Vergelea for pointing out that the NP-hardness proofs of Theorems 4.1 and 4.2, initially published in \cite{ourpaper}, required corrections.\\
This research benefited from the support of the FMJH Program Gaspard Monge for optimization and operations research and their interactions with data science.
\bibliographystyle{abbrvnat}
\bibliography{sample}

\end{document}